\newtheorem{proposition}{Proposition}[section]
\begin{document}
\title{Effects of non-linear vacuum electrodynamics on the polarization plane of light}
\author{Volker Perlick}
\email{volker.perlick@zarm.uni-bremen.de}
\affiliation{ZARM, University of Bremen, Am Fallturm, 28359 Bremen, Germany}
\author{Claus L\"{a}mmerzahl}
\email{claus.laemmerzahl@zarm.uni-bremen.de}
\affiliation{ZARM, University of Bremen, Am Fallturm, 28359 Bremen, Germany \\
Institute of Physics, University of Oldenburg, 26111 Oldenburg, 
Germany}{}
\author{Alfredo Mac{\' \i}as}
\email{amac@xanum.uam.mx}
\affiliation{Departamento de F{\' \i}sica, Universidad Aut{\' o}noma Metropolitana-Iztapalapa, 
A.P. 55-534, Mexico D.F. 09340, Mexico}{}
\pacs{03.50.Kk,11.10.Lm}

\begin{abstract}
We consider the Pleba{\'n}ski class of nonlinear theories of vacuum 
electrodynamics, i.e., Lagrangian theories that are Lorentz invariant
and gauge invariant. Our main goal is to derive the transport law of 
the polarization plane in such a theory, on an unspecified general-relativistic
spacetime and with an unspecified electromagnetic background field. To that
end we start out from an approximate-plane-harmonic-wave ansatz that takes the 
generation of higher harmonics into account. By this ansatz, the 
electromagnetic field is written as an asymptotic series with respect to
a parameter $\alpha$, where the limit $\alpha \to 0$ corresponds to 
sending the frequency to infinity. We demonstrate that by solving the
generalized Maxwell equations to zeroth and first order with respect to
$\alpha$ one gets a unique transport law for the polarization plane along
each light ray. We exemplify the general results with the Born-Infeld 
theory.

\end{abstract}

\maketitle

\section{Introduction}\label{sec:intro}

In the standard Maxwell theory of vacuum electrodynamics, the
field strength tensor $F_{ab}$ (which comprises the vector fields
$\vec{E}$ and $\vec{B}$ ) is related to the excitation tensor
(which comprises the vector fields $\vec{D}$ and $\vec{H}$) by a 
linear constitutive law. However, there are good reasons to
assume that this law has to be replaced by a nonlinear relation
for very strong electromagnetic fields. In the course of history,
several such nonlinear modifications of the vacuum Maxwell theory
have been suggested. 

One of the best known examples is the theory of Born and Infeld \cite{BornInfeld1934} 
from 1934. Its introduction was motivated by the observation that in the 
standard Maxwell vacuum theory the field energy in an arbitrarily small ball 
around a point charge is infinite which leads to an infinite self-force, and 
that this infinity might be overcome if one modifies the constitutive law of 
the vacuum in a nonlinear fashion. The Born-Infeld theory introduces a new
hypothetical constant of Nature, $b_0$, with the dimension of a (magnetic) field 
strength. In the limit $b_0 \to \infty$ the theory approaches the standard Maxwell 
theory, i.e., the fact that the latter is in good agreement with experiments can 
be understood if one assumes that $b_0$ is very large. On the basis of the 
Born-Infeld theory one would have to expect measurable deviations from the
vacuum Maxwell theory in electromagnetic fields that are of a similar order
of magnitude as $b_0$. Although not exactly in the main stream of physics, the
Born-Infeld theory was always taken seriously by many scientists. In the late 1990s
this theory got an additional strong push when Tseytlin \cite{Tseytlin1999}
realized that it can be derived, as an effective theory, from some kind of 
string theories.

Another very well known nonlinear modification of the vacuum Maxwell theory is the 
Heisenberg-Euler theory \cite{HeisenbergEuler1936} from 1936. It is a 
classical field theory which comes about, as an effective theory, if one-loop
corrections from quantum electrodynamics are taken into account. In contrast to the
Born-Infeld theory, it does not involve any new hypothetical constant of Nature, i.e., 
it numerically predicts how strong an electromagnetic field has to be in order
to produce measurable deviations from the standard vacuum Maxwell theory. Since
a few years (magnetic) fields of this strength can be produced in the 
laboratory. 

The Born-Infeld theory and the Heisenberg-Euler theory are Lorentz invariant,
they are gauge invariant, and they derive from a Lagrangian. The entire class of 
theories that share these properties was systematically studied by Pleba{\'n}ski
\cite{Plebanski1970}, with important early contributions by Boillat \cite{Boillat1970}.
We refer to it as to the \emph{Pleba{\'n}ski class} of electromagnetic theories. 
The Born-Infeld theory and the Heisenberg-Euler theory are the best known examples 
in this class, but there are many more. In particular, there are theories of the 
Pleba{\'n}ski class that allow for regular black-hole solutions if they are coupled 
to Einstein's field equations. The first two examples were found by Ay{\'o}n-Beato 
and Garc{\'\i}a \cite{AyonGarcia1998,AyonGarcia1999}.  

It is a general feature of nonlinear theories that the superposition principle is 
no longer satisfied. As a consequence, the propagation of light is influenced
by electromagnetic background fields. This effect is known as ``light-by-light
scattering'' and it has been observed in 1997, see Burke at el. \cite{BurkeEtAl1997}, 
in good agreement with the prediction by the Heisenberg-Euler theory. Another effect 
predicted by most theories of the Pleba{\'n}ski class, with the notable exception
of the Born-Infeld theory, is birefringence in vacuo. This means that, according to 
these theories, a light beam that enters into a region with a sufficiently
strong electromagnetic background field would split into two beams, as in the 
case of a light beam entering into a crystal according to ordinary optics, with
the two different beams corresponding to two different polarization states. Such a 
birefringence in vacuo is predicted, in particular, by the Heisenberg-Euler theory.  
Experimentalists are trying to observe this effect since several years and there is 
the general expectation that these attempts will be succesful soon, see in particular
the most recent status report on the so-called PVLAS experiment by 
Della Valle et al. \cite{DellavalleEtAl2016}. In this experiment not only the 
birefringence in vacuo but also the dichroism of the Heisenberg-Euler theory 
is tried to be measured. The latter means the effect that there are different 
absorption coefficients for the two different polarization states which results 
in an apparent rotation of the polarization plane. Finally, we mention that there 
are also attempts to verify effects from nonlinear electrodynamics with astrophysical 
observations. A particularly promising idea is to observe the birefringence in vacuo 
if light passes through a very strong magnetic field, such as in the neighborhood of 
a magnetar. A first observation that might indicate such an effect was already made, 
see Mignani et al. \cite{MignaniEtAl2017}.

We emphasize that experiments searching for birefringence in vacuo cannot be used as 
tests for the Born-Infeld theory because in the latter there is no such effect. As 
an alternative, the Born-Infeld theory may be tested with the help of Michelson 
interferometry. Such an experiment was discussed for the Heisenberg-Euler theory by 
Boer and van Holten \cite{BoerHolten2002}, D{\"o}brich and Gies \cite{DoebrichGies2009},
Zavattini  and  Calloni \cite{ZavattiniCalloni2009} and Grote \cite{Grote2015},
for the Heisenberg-Euler and the Born-Infeld theories by Denisov, Krivchenkov and  
Kravtsov \cite{DenisovKrivchenkovKravtsov2004}, and in detail for a general theory of 
the Pleba{\'n}ski class by Schellstede et al. \cite{SchellstedePerlickLaemmerzahl2015}. 
Moreover, there are suggestions to test the Born-Infeld theory with wave-guides, 
see Ferraro \cite{Ferraro2007}, or with fluid motions in a magnetic 
background field, see Dereli and Tucker \cite{DereliTucker2010}. As of now, none 
of these experiments has been actually carried through.  

In this paper we want to study, for a general theory of the Pleba{\'n}ski class,
the effect of a background field on the transport law of the polarization plane
along a light ray. This will give us a new way of testing these theories, in
particular the Born-Infeld theory, experimentally. We emphasize that this is
to be distinguished from all the experiments mentioned above. In particular,
it is not to be confused with the planned observation of dichroism by the 
PVLAS experiment: The latter is an effect on the absorption of light, 
depending on the polarization state. Here we want to investigate the direct
effect of an electromagnetic background field on the polarization plane.

To that end we start out from an approximate-plane-harmonic-wave ansatz, 
taking the generation of higher harmonics into account. By this ansatz the 
electromagnetic field is written as an asymptotic series with respect to a 
parameter $\alpha$. Sending $\alpha$ to zero corresponds to sending the frequency 
to infinity. We will see that we have to consider the generalized Maxwell 
equations to zeroth order and to first order with respect to $\alpha$ in order 
to determine the transport law for the polarization plane. Earlier studies of 
the high-frequency limit in nonlinear theories were restricted to the derivation 
of the eikonal equation from the zeroth order of the generalized Maxwell equations. 
It is well known that, as a result, one finds that the light rays are the null 
geodesics of two optical metrics; this was first shown by Novello et 
al. \cite{Novelloetal2000} and later, in different representations, by Obukhov 
and Rubilar \cite{ObukhovRubilar2002} and by Schellstede et al. 
\cite{SchellstedePerlickLaemmerzahl2015}. 
To the best of our knowledge, the transport law of the polarization plane was not 
yet considered for an arbitrary theory of the Pleba{\'n}ski class. 

We will not specify the nonlinear electromagnetic theory, apart from the 
fact that we require it to be of the Pleba{\'n}ski class. However, we mention 
that not all theories of this type are to be considered as physically meaningful: 
Some of them violate causality in the sense that the light cones of the optical 
metrics are not inside the light cone of the spacetime metric, see Schellstede
et al. \cite{SchellstedePerlickLaemmerzahl2016}. Also, not all of them
give rise to a well-posed initial-value problem, see Abalos et al. \cite{Abalosetal2015}.

The paper is organized as follows. In Section \ref{sec:Pleb} we briefly review
the basic features of theories of the Pleba{\'n}ski class. In Section 
\ref{sec:phw} we introduce our approximate-plane-wave ansatz on an arbitraty 
general-relativistic spacetime and for an arbitrary electromagnetic
background field. In Section \ref{sec:zerothorder} we evaluate the generalized
Maxwell equations to zeroth order which gives us the eikonal equation and an 
algebraic condition on the polarization plane. In Section \ref{sec:firstorder}
we consider the generalized Maxwell equations to first order and discuss the
additional conditions they give us on the polarization plane. In Section 
\ref{sec:BI} we exemplify the results with the Born-Infeld theory.

\section{The Pleba{\'n}ski class of non-linear electrodynamical theories}\label{sec:Pleb}
We consider a general-relativistic spacetime, i.e., an oriented 4-dimensional manifold
with a metric tensor $g_{ab}$ of Lorentzian signature. The covariant derivative 
associated with the Levi-Civita connection of the metric will be denoted $\nabla _a$. 
Latin indices take values 0,1,2,3 and are lowered with $g_{ab}$ and 
raised with its inverse $g^{bc}$.

The Pleba{\'n}ski class \cite{Plebanski1970}
consists of all non-linear electrodynamical theories that derive from an
action of the form
\begin{equation}\label{eq:action}
S[A_c]=\frac{1}{4\pi c}\int _M \left(\mathcal{L}(F,G)+
\frac{4\pi}{c}\,j^aA_a \right)\, \sqrt{| \mathrm{det}(g_{bc})|} \; d^4x \,.
\end{equation}
Here $M$ is a domain of the spacetime, $d^4x=dx^0 \wedge dx^1 \wedge dx^2 \wedge dx^3$,
 $j^a$ is a \emph{given} current density, $A_a$ is the electromagnetic potential, 
\begin{equation}\label{eq:Fab}
F_{ab} = \nabla _a A_b-\nabla _b A_a
\end{equation}
is the electromagnetic field strength and 
$\mathcal{L}$ is the Lagrangian for the electromagnetic field. It is assumed that the
latter depends only on the two invariants 
\begin{equation}\label{eq:FG}
F=\frac{1}{2}\,F_{ab}F^{ab} 
\quad \text{and} \quad
G=-\frac{1}{4}\,F_{ab} \, ^{\star \!} F^{ab} \, .
\end{equation}
Here and in the following, $^{\star \,}$denotes the
Hodge star operator, i.e., 
\begin{equation}
^{\star \!} F_{ab}\, = \, \frac{1}{2} \, \varepsilon_{abcd}F^{cd}
\end{equation}
where $\varepsilon _{abcd}$ is the totally antisymmetric Levi-Civita
tensor field (volume form) associated with the spacetime metric.   

By (\ref{eq:Fab}) the homogeneous Maxwell equation is automatically satisfied,
\begin{equation}\label{eq:Max1}
\varepsilon ^{abcd} \nabla_b F_{cd}=0 \, .
\end{equation}
Requiring that the variational derivative of the action (\ref{eq:action}) with respect 
to the potential $A_c$ vanishes, for all compact domains $M$ and all variations that 
keep $A_c$ fixed on the boundary of $M$, leads to the inhomogeneous Maxwell equation,
\begin{equation}\label{eq:Max2}
\nabla_a H^{ab} \, = \, - \, \frac{4\pi}{c}\,j^b\,,
\end{equation}
where
\begin{equation}\label{eq:const}
H^{ab}=-\frac{\partial \mathcal{L}}{\partial F_{ab}}
= 
 -2\, \mathcal{L}_F \, F^{ab}+\mathcal{L}_G \, ^{\star \!} F^{ab} 
\end{equation}
is the electromagnetic excitation. For the sake of brevity, we write
\begin{equation}\label{eq:LFLG}
\mathcal{L}_F = \dfrac{\partial \mathcal{L}}{\partial F} \, , \quad
\mathcal{L}_G = \dfrac{\partial \mathcal{L}}{\partial G} 
\end{equation}
and 
\begin{equation}\label{eq:LFLG2}
\mathcal{L}_{FF} = \dfrac{\partial ^2 \mathcal{L}}{\partial F ^2} \, , \quad
\mathcal{L}_{GG} = \dfrac{\partial ^2 \mathcal{L}}{\partial G ^2} \, , \quad
\mathcal{L}_{FG} = \dfrac{\partial ^2 \mathcal{L}}{\partial F \partial G} \, .
\end{equation}
It is the constitutive law (\ref{eq:const}) that 
distinguishes  different theories, while the Maxwell equations 
(\ref{eq:Max1}) and (\ref{eq:Max2}) are always the same.

Each particular theory of the Pleba{\'n}ski class is characterized by a 
particular Lagrangian and, thereby, by a particular constitutive law. Let us
mention the two most important examples: For the Born-Infeld theory \cite{BornInfeld1934}, 
the Lagrangian reads
\begin{equation}\label{eq:LBI}
\mathcal{L} = b_0^2 - b_0^2 \sqrt{1+ \dfrac{F}{b_0^2}-\dfrac{G^2}{b_0^4}}
\end{equation}
where $b_0$ is a hypothetical constant of Nature with the dimension of
a magnetic field strength. For $b_0 \to \infty$ the Born-Infeld theory
reproduces the standard Maxwell vacuum theory. 
For the Heisenberg-Euler theory \cite{HeisenbergEuler1936},
\begin{equation}\label{eq:LHE}
\mathcal{L} = E_0^2 \left( - \dfrac{1}{2} \dfrac{F}{E_0^2} + 
\Lambda \Big( \dfrac{F^2}{E_0^4} + 7 \dfrac{G^2}{E_0^4} \Big) + \, \dots \right) 
\end{equation}
where $E_0=m^2c^4/e^3$ and $\Lambda = \hbar c/(90 \pi e^2)$. Here $m$ is
the electron mass, $e$ is the electron charge, $\hbar$ is the reduced Planck
constant and the ellipses in (\ref{eq:LHE})   stand for terms of third and 
higher order in $F$ and $G$.
 
\section{Approximate-plane-harmonic-wave~ansatz}\label{sec:phw}
An approximate-plane-harmonic wave is a one-parameter family $F^{\alpha}_{cd}$
of field strength tensors, depending on a real parameter $\alpha$, of the form
\begin{equation}\label{eq:aphwF}
F^{\alpha}_{cd} = F_{cd} + \alpha \, F_{cd}^{(1)} + 
\sum _{K=2}^{\infty} \alpha ^K F_{cd} ^{(K)}
\end{equation}
where 
\begin{equation}\label{eq:f1}
F_{cd}^{(1)} = \mathrm{Re} \Big\{ e^{iS/\alpha} f_{cd} ^{(11)} \Big\}
\end{equation}
and
\begin{equation}\label{eq:f}
F_{cd}^{(K)} = \sum _{\tilde{K} =0} ^{K} 
\mathrm{Re} \Big\{ e^{i \tilde{K} S/\alpha} f_{cd} ^{(K \tilde{K})} \Big\}
\, \quad \text{for} \; K \ge 2 \, .
\end{equation}
Here $F_{cd}$ is a given electromagnetic background field that is independent of $\alpha$, 
$S$ is a real-valued function and $f^{(K \tilde{K})}_{cd}$ is a complex-valued
antisymmetric tensor field  for each pair of integers ${K, \tilde{K}}$ that occurs. 
We assume that, on the spacetime region considered, the tensor 
fields $\nabla _a S$ and $f_{cd}^{(11)}$ have no zeros. The series is to be 
understood as an asymptotic series, not as a convergent series. 

The function $S$ is called the \emph{eikonal function}. On a sufficiently
small neighborhood, the field $F^{(1)}_{ab}$ is approximately a plane 
harmonic wave: The surfaces $S= \mathrm{constant}$ are the wave-fronts 
and the gradient of S divided by alpha defines the wave four-covector.
Correspondingly, the frequency measured by an observer with four-velocity $U^a$ is 
$\omega = U^a \nabla _aS / \alpha$. The limit $\alpha \to 0$ corresponds 
to sending the frequency to infinity. The idea is to feed the ansatz (\ref{eq:aphwF})
into Maxwell's equations, to solve these equations iteratively order by
order in $\alpha$ and, in this way, to asymptotically approach a one-parameter
family of exact solutions. 

Our ansatz (\ref{eq:aphwF}) is a generalization of the standard 
approximate-plane-harmonic-wave ansatz. The latter goes back to Ralph Luneburg
and is detailed, for wave propagation in linear and isotropic media, e.g. in the
text-book by Kline and Kay \cite{KlineKay1965}. Our ansatz is more general
in two respects: Firstly, we take a non-zero background field into account.
In a linear theory, it suffices to consider the case with zero background
field because, by the superposition principle, the propagation of the 
approximate-plane-harmonic wave is independent of a background field. In a 
non-linear theory, however, the propagation is influenced by a background field.
Secondly, the higher-order fields, $F_{cd}^{(K)}$ for $K \ge 2$, come not 
only with the same frequency as the first-order field $F_{cd}^{(1)}$ (the
terms with $\tilde{K} = 1$) but also with integer multiples of this frequency (the 
terms with $\tilde{K} \neq 1$). This reflects the \emph{generation of higher harmonics} 
which is well-known from optics in non-linear media. It should not come as a 
surprise that it has to be taken into account also in the non-linear vacuum 
theories of the Pleba{\'n}ski class. Higher harmonics play no role if one 
considers Maxwell's equations only to the lowest order (i.e., $\alpha ^0$). 
This is the reason why it was not necessary to take them into account 
in \cite{Perlick2011} where the eikonal equation was derived for Maxwell's 
equations with a local but otherwise arbitrary constitutive law. In the 
present paper, however, we want to derive the transport law for the polarization 
plane which requires considering Maxwell's equations also to the next order 
(i.e. $\alpha ^1$). We will see that these equations cannot in general be 
solved if we set all terms $f_{cd}^{(K \tilde{K})}$ with $\tilde{K} \neq 1$ equal to zero.

For our purpose we need the series (\ref{eq:aphwF}) up to second order,
\begin{gather}
F^{\alpha}_{cd} = F_{cd} + 
\alpha \mathrm{Re} \Big\{ e^{iS/ \alpha} f_{cd}^{(11)} \Big\} \qquad
\nonumber
\\
+ \alpha ^2 \mathrm{Re} \Big\{ f_{cd}^{(20)} + 
e^{iS/ \alpha} f_{cd}^{(21)} +
e^{2iS/ \alpha} f_{cd}^{(22)} \Big\} + \, \dots
\label{eq:aphwF2}
\end{gather}
which includes \emph{frequency doubling} $(\tilde{K}=2)$ and the generation of a 
non-oscillatory mode, known from non-linear media as \emph{optical 
rectification} $(\tilde{K}=0)$. 
The homogeneous Maxwell equation (\ref{eq:Max1}) is automatically 
satisfied for all $\alpha$ if we assume that (\ref{eq:aphwF2}) 
derives from a potential,
\begin{equation}\label{eq:Aalpha}
F^{\alpha}_{cd} = \nabla _c A^{\alpha}_d- \nabla _d A^{\alpha}_c  \, .
\end{equation}
It is easy to see that such a potential (up to an arbitrary gradient
term) must be of the form
\begin{gather}
A^{\alpha}_{d} = A_{d} + 
\alpha ^2 \mathrm{Re} \Big\{ a_d^{(10)} + e^{iS/ \alpha} a_d^{(11)} \Big\} 
\qquad \quad 
\nonumber
\\
+ \alpha ^3 \mathrm{Re} \Big\{ a_d^{(20)} + 
e^{iS/ \alpha} a_d^{(21)} +
e^{2iS/ \alpha} a_d^{(22)} \Big\} + \, \dots
\label{eq:aphwA0}
\end{gather}
Then (\ref{eq:Aalpha}) holds to zeroth order in $\alpha$ with
\begin{equation}\label{eq:Fa}
F_{cd} = \nabla _c A_d- \nabla _d A_c \, , 
\end{equation}
to first order with
\begin{equation}\label{eq:f11a}
f_{cd}^{(11)} = i \big( \nabla _c S \, a^{(11)}_d-\nabla _d S \, a^{(11)}_c \big) \, ,
\end{equation}
and to second order with
\begin{gather}
f_{cd}^{(20)}  = \nabla _c a^{(10)}_d - \nabla _d a^{(10)}_d \, ,
\label{eq:f20a}
\\
f_{cd}^{(21)}  = \nabla _c a^{(11)}_d - \nabla _d a^{(11)}_d 
\nonumber
\\
+ i \big( \nabla_c S \, a_d^{(21)} - \nabla_d S \, a_c^{(21)} \big) \, ,
\label{eq:f21a}
\\
f_{cd}^{(22)}  = 
2 \, i \big( \nabla_c S \, a_d^{(22)} - \nabla_d S \, a_c^{(22)} \big) \, .
\label{eq:f22a}
\end{gather} 
Here we have used our assumption that the gradient of $S$ has no zeros which
implies that $S \neq 0$ almost everywhere and that, accordingly, the functions 
$1$, $\mathrm{sin} \big( S(x)/\alpha \big)$, $\mathrm{cos}(\big( S(x)/\alpha \big))$, 
$\mathrm{sin} \big( 2 S(x)/\alpha \big))$ and $\mathrm{cos} \big(2 S(x)/\alpha \big)$ 
are linearly independent.

Feeding the approximate-plane-harmonic  wave (\ref{eq:aphwF}) into
the constitutive law (\ref{eq:const}) gives, after a rather long but 
straight-forward calculation, an excitation of the form
\begin{gather}\label{eq:aphwH}
H^{\alpha}_{ab} = H_{ab} +  
\alpha \, \mathrm{Re} \Big\{ e^{iS/\alpha} h^{(11)}_{ab} \Big\} 
\qquad 
\\
\nonumber
+
\alpha ^2 \, \mathrm{Re} \Big\{ h^{(20)}_{ab} 
+ e^{iS/\alpha} h^{(21)}_{ab} +
e^{i2S/\alpha} h^{(22)}_{ab}
\Big\} + \, \dots
\end{gather}
The zeroth order term in (\ref{eq:aphwH}) is just the excitation of the background
field, 
\begin{equation}\label{eq:aphwH0}
H_{ab} =
 -2\, \mathcal{L}_F \, F_{ab}+\mathcal{L}_G {} ^{\star \!} F_{ab} \, ,
\end{equation}
the first-order amplitude is 
\begin{equation}\label{eq:h11}
h_{ab}^{(11)} = \dfrac{1}{2} \chi _{ab}{}^{cd} f_{cd}^{(11)} \, ,
\end{equation}
and the second-order amplitudes are
\begin{equation}\label{eq:h20}
h_{ab}^{(20)} =  
\dfrac{1}{2} \chi _{ab}{}^{cd} f_{cd}^{(20)}
+ \dfrac{1}{2} \psi _{ab}{}^{cdef} f_{cd}^{(11)} \overline{f}{}_{ef}^{(11)}\, ,
\end{equation}
\begin{equation}\label{eq:h21}
h_{ab}^{(21)} = \dfrac{1}{2} \chi _{ab}{}^{cd} f_{ef}^{(21)} \, ,
\end{equation}
\begin{equation}\label{eq:h22}
h_{ab}^{(22)} = \dfrac{1}{2} \chi _{ab}{}^{cd} f_{cd}^{(22)} +
\dfrac{1}{2} \psi _{ab}{}^{cdef} f_{cd}^{(11)} f_{ef}^{(11)}\, ,
\end{equation}
with
\begin{gather}
\chi _{ab}{}^{cd} = \mathcal{L} _G \varepsilon _{ab}{}^{cd}
- 2 \mathcal{L} _F 
\big( \delta _a^c \delta _b^d-\delta _a^d \delta _b^c \big)
-4 \mathcal{L} _{FF} F_{ab}F^{cd} 
\nonumber
\\
\label{eq:chi}
+ 2 \mathcal{L}_{FG} \big( F_{ab}{} ^{\star \!}F^{cd} +{}
^{\star \!}F_{ab} F^{cd} \big) 
- \mathcal{L}_{GG}{} ^{\star \!}F_{ab}{} ^{\star \!}F^{cd} 
\end{gather}
and
\begin{gather}
\psi _{ab}{}^{cdef} = 
\dfrac{1}{4} \Big(-2 \mathcal{L}_{FF}F_{ab} 
+\mathcal{L}_{FG} {} ^{\star \!}F_{ab} \Big)
\Big( g^{ce}g^{df}-g^{de}g^{cf} \Big)
\nonumber
\\
+\dfrac{1}{2} \Big( \delta _a^c \delta _b^d -\delta _b^c \delta _a^d \Big)
\Big(-2 \mathcal{L}_{FF}F^{ef}+ 
\mathcal{L}_{FG} {} ^{\star \!}F^{ef} \Big)
\nonumber
\\
- \dfrac{1}{8} \Big(-2 \mathcal{L}_{FG}F_{ab} 
+ \mathcal{L}_{GG} {} ^{\star \!}F_{ab} \Big) \varepsilon ^{cdef} 
\nonumber
\\
- \dfrac{1}{4} \varepsilon _{ab}{}^{cd} \Big(-2 \mathcal{L}_{FG}F^{ef} 
+\mathcal{L}_{GG} {} ^{\star \!}F^{ef} \Big)  
\nonumber
\\
+ \dfrac{1}{2} \Big(-2 \mathcal{L}_{FFF}F_{ab} + 
\mathcal{L}_{FFG} {} ^{\star \!}F_{ab} \Big) F^{cd}F^{ef}
\nonumber
\\
- \dfrac{1}{2} \Big(-2 \mathcal{L}_{FFG}F_{ab} + 
\mathcal{L}_{FGG} {} ^{\star \!}F_{ab} \Big) {} F^{cd} {} ^{\star \!}F^{ef} 
\nonumber
\\
+ \dfrac{1}{8} \Big(-2 \mathcal{L}_{FGG}F_{ab} + 
\mathcal{L}_{GGG} {} ^{\star \!}F_{ab} \Big) {} ^{\star \!}F^{cd} {} ^{\star \!}F^{ef} 
\, .
\end{gather}

We see that the first- order constitutive law (\ref{eq:h11}) is of the 
same form as the constitutive law of a linear medium, but now with a constitutive
tensor $\chi _{ab}{}^{cd}$ that depends on the invariants $F$ and $G$
of the background field.  Quite generally, such a constitutive tensor
can be decomposed into principal part, skewon part and axion part
(see Hehl and Obukhov \cite{HehlObukhov2003}). In (\ref{eq:chi}),
the first term is the axion part, the rest is the principal part and the 
skewon part is zero. It is known  \cite{HehlObukhov2003} that the
skewon part is always vanishing if the theory derives from a variational
principle.

At the second order, we get for each of the three amplitudes 
$h_{ab}^{(2\tilde{K})}$ a linear law with the same constitutive tensor 
$\chi_{ab}{}^{cd}$ as for the first order, but for $\tilde{K}=0$ and $\tilde{K}=2$ 
additional quadratic terms with a second-order constitutive tensor
$\psi _{ab}{}^{cdef}$ which looks rather complicated.

We will now evaluate the Maxwell equations. The homogeneous Maxwell
equation is satisfied if we express the amplitudes $f_{cd}^{(K \tilde{K})}$
in terms of the potential according to (\ref{eq:f11a}), (\ref{eq:f20a}),
(\ref{eq:f21a}) and (\ref{eq:f22a}). Feeding the excitation (\ref{eq:aphwH})
into the inhomogeneous Maxwell equation requires at zeroth order
\begin{equation}\label{eq:Maxzero1}
- \dfrac{4 \pi}{c} j_b = \nabla ^a H_{ab}  \, ,
\end{equation}
\begin{equation}\label{eq:Maxzero2}
0 = \nabla ^a S \, h_{ab} ^{(11)} \, ,
\end{equation}
and at first order
\begin{equation}\label{eq:Maxfirst1}
0 = \nabla ^a h_{ab}^{(11)}  + i \nabla ^aS \, h_{ab}^{(21)} \, ,
\end{equation}
\begin{equation}\label{eq:Maxfirst2}
0 = \nabla ^a S \, h_{ab}^{(22)}  \, .
\end{equation}
Here we have assumed that the current $j_b$ is independent of $\alpha$,
i.e., that only the background field may have a source whereas our 
approximate-plane-harmonic wave is source-free. Moreover, 
we have again used our assumption that the gradient of $S$ has no zeros which
implies that the functions 
$1$, $\mathrm{sin} \big( S(x)/\alpha \big)$, $\mathrm{cos}(\big( S(x)/\alpha \big))$, 
$\mathrm{sin} \big( 2 S(x)/\alpha \big))$ and $\mathrm{cos} \big(2 S(x)/\alpha \big)$ 
are linearly independent.

At zeroth order we get one equation, (\ref{eq:Maxzero2}), that has to be 
satisfied. With (\ref{eq:f11a}) and (\ref{eq:h11}) this equation reads
\begin{equation}\label{eq:Maxzeroa}
0 = 
\nabla ^a S \chi _{ab}{}^{cd} \nabla _c S \, a_d^{(11)} \, .
\end{equation}
We will evaluate this equation in the next section. We will see that
it gives us the \emph{eikonal equation} for $S$ and an algebraic condition
on $a_d^{(11)}$ which is known as the zeroth order \emph{polarization condition}. 
Note that $a_d^{(11)}$ is not gauge-invariant: As can be read from (\ref{eq:f11a}),
the field strength $f_{cd}^{(11)}$ is unchanged if a multiple  of $\nabla _d S$ is 
added to $a^{(11)}_d$. We will see that the zeroth order polarization condition
is actually a condition on the (gauge-invariant) plane spanned by $a_d^{(11)}$
and $\nabla _d S$. We refer to this plane as to the \emph{polarization plane}.

At first order we get two equations, (\ref{eq:Maxfirst1}) and (\ref{eq:Maxfirst2}),
that have to be satisfied. With (\ref{eq:f11a}), (\ref{eq:f20a}), (\ref{eq:f21a}), 
(\ref{eq:h11}), (\ref{eq:h21}) and (\ref{eq:h22}) these equations 
read
\begin{gather}
0 = \nabla ^a \Big( \chi _{ab}{}^{cd} \nabla _c S \, a_d^{(11)} \Big)
\nonumber
\\
+  \nabla ^a S \, \chi _{ab}{}^{cd} \nabla _c a_d^{(11)}
+ i \nabla ^a S \, \chi _{ab}{}^{cd} \nabla _c S \, a_d ^{(21)} \, ,
\label{eq:Max2first1a}
\end{gather}
\begin{equation}\label{eq:Max2first2a}
0 = \nabla ^a S \, \chi _{ab}{}^{cd} \nabla _c S \, a_d ^{(22)}
- \psi _{ab}{}^{cdef} \nabla ^a S \, \nabla _c S \, a_d ^{(11)} a_f^{(11)} \, . 
\end{equation}
We will evaluate these two equations, as far as necessary for our
purpose, in Section \ref{sec:firstorder} below. They will give us a differential 
equation for $a_d^{(11)}$ which is known as the first-order \emph{transport equation}
and algebraic conditions on $a_d^{(21)}$ and $a_d^{(22)}$ which are the first-order  
polarization conditions. These equations are not in general satisfied if $a_d^{(22)}=0$,
i.e., frequency doubling has to be taken into account if Maxwell's equations are
to be solved to first order.

If one wants to go beyond the first order, one can do this step by step. At the
$K^{\mathrm{th}}$ level one gets transport equations for the amplitudes 
$a_d^{(K \tilde{K})}$ and polarization conditions on the amplitudes 
$a_d^{((K+1)\tilde{K})}$.

\section{Evaluation of the zeroth-order field equation}\label{sec:zerothorder}

In Section \ref{subsec:eikonal} we will derive the eikonal equation  
from the zeroth-order field equation (\ref{eq:Maxzeroa}), in
Section \ref{subsec:rays} we will determine the Hamiltonian for the rays
and in Section \ref{subsec:polcon} we will evaluate the zeroth-order
polarization condition. The main results of Sections \ref{subsec:eikonal}
and \ref{subsec:rays} are not new. In particular, it is known that for any theory of 
the Pleba{\'n}ski class the rays are the null geodesics of two optical 
metrics. This was first demonstrated by Novello et al. \cite{Novelloetal2000}. 
The same result was re\-de\-rived, using a different representation, by Obukhov 
and Rubilar \cite{ObukhovRubilar2002} who also showed that the optical
metrics have Lorentzian signature if they are non-degenerate. Still
another form of the optical metrics was derived by Schellstede et al. 
\cite{SchellstedePerlickLaemmerzahl2015}. However, we have to rederive these 
known results here because in doing so we will also establish a number of 
new relations that will be needed later. We will use the same representation 
as in \cite{SchellstedePerlickLaemmerzahl2015}.

\subsection{Derivation of the eikonal equation}\label{subsec:eikonal}

In the following we write
\begin{equation}\label{eq:puv}
p_a= \nabla _a S \, , \quad
u_a= F_{ab} \nabla ^b S \, , \quad
v_a= {\,}{}^{*}{\!}F{}_{ab} \nabla ^b S  
\end{equation}
which implies 
\begin{equation}\label{eq:puvorth}
p_au^a = p_a v^a = 0 \, .
\end{equation}
Then the zeroth-order field equation (\ref{eq:Maxzeroa}) can be 
rewritten as
\begin{equation}\label{eq:MA}
M_b{}^d a^{(11)}_d = 0
\end{equation}
where
\begin{gather}\label{eq:M}
M_b{}^d = \chi _{ab}{}^{cd} p ^a p _c 
= -2 \mathcal{L}_F p _a  p ^a \delta _b ^d 
+ 2\mathcal{L} _F p _b p ^d
\\
\nonumber
- 4 \mathcal{L} _{FF} u_b u^d
+ 2 \mathcal{L} _{FG} \big( u _b v^d + v_b u^d \big)
- \mathcal{L} _{GG} v_b v^d
\, .
\end{gather}
Note that $M_b{}^d$ is self-adjoint with respect to the 
spacetime metric, i.e. $M_{ab}=M_{ba}$. This is a consequence of
the above-mentioned fact that the skewon part of the constitutive
tensor vanishes. Also note that the axion part gives no contribution
to (\ref{eq:M}) which is a general result \cite{HehlObukhov2003,Itin2007}. 

From (\ref{eq:M}) we read that $p_d$ is in the kernel of $M_b{}^d$, so
(\ref{eq:MA}) is satisfied by $a^{(11)}_d = \psi \, p _d$ with any
scalar factor $\psi$. However, by (\ref{eq:f11a}) such a potential 
gives a trivial first-order field strength. As we require $f_{cd}^{(11)} \neq 0$,
we need a solution $a^{(11)}_d$ of (\ref{eq:MA}) that is linearly 
independent of $p_d$, i.e., the kernel of $M_d{}^b$ has to be at least two-dimensional.
This is the case if and only if the \emph{adjugate} $A_d{}^b$ of
$M_d{}^b$ (also known as the \emph{classical adjoint}) vanishes,
cf. Itin \cite{Itin2009}. A straight-forward (though tedious)
calculation shows that the adjugate is given by
\begin{gather}\label{eq:adj}
A_b{}^a = - 8 \mathcal{L} _F 
\big( M (p_cp^c)^2
+  N p_cp^c u_du^d 
+ P (u_du^d)^2 \big)
p_b p ^a 
\end{gather}
where
\begin{gather}
M=\mathcal{L}_F^2+2\,\mathcal{L}_F \mathcal{L}_{FG}\,G
-\frac12\,\mathcal{L}_F\mathcal{L}_{GG}\,F -PG^2 \, ,
\label{eq:coeff1}
\\
N=2\,\mathcal{L}_F \mathcal{L}_{FF}
+\frac12\,\mathcal{L}_F\mathcal{L}_{GG} -PF \, ,
\label{eq:coeff2}
\\
P=\mathcal{L}_{FF}\mathcal{L}_{GG}-\mathcal{L}^2_{FG}\,.
\label{eq:coeff3}
\end{gather}
Here we have used the well-known \cite{Plebanski1970} identities 
\begin{equation}\label{eq:FFid}
{}^{*}{\!}F_{ac} F^{bc} = - G \delta _a ^b \, , \quad
F_{ac} F^{bc} -
{}^{*}{\!}F_{ac} {}^{*}{\!} F^{bc}
= F \delta _a^b 
\end{equation}
which imply
\begin{equation}\label{eq:uvorth}
u_cv^c = - G p_cp^c \, , \quad 
u_cu^c-v_cv^c=Fp_cp^c \, .
\end{equation}
By (\ref{eq:adj}), the zeroth-order field equation
(\ref{eq:MA}) admits a solution $a_d^{(11)}$ giving a 
non-trivial field strength if and only if 
\begin{gather}\label{eq:eikonal}
0 = \mathcal{L} _F 
\big( M (p_cp^c)^2
+  N p_cp^c u_du^d 
+ P (u_du^d)^2 \big)
\, .
\end{gather}
This is the \emph{eikonal equation}. It is a first-order partial
differential equation for the function $S$. Each solution to this
equation determines a family of light rays, in the same way as in
Hamiltonian mechanics each solution to the Hamilton-Jacobi equation
determines a family of trajectories, see the next subsection.
If viewed as an algebraic condition on the covector $p_a$, 
(\ref{eq:eikonal}) is known as the \emph{dispersion relation}, 
as the \emph{characteristic equation} or as the \emph{Fresnel 
equation}. 

From now on we reqire $\mathcal{L}_F \neq 0$ because
otherwise the eikonal equation is an identity, so there 
is no well-defined notion of rays. If in addition $M \neq 0$, 
(\ref{eq:eikonal}) factorizes according to 
\begin{equation}\label{eq:eikonalfact}
\big( \tilde{g}{}^{bc}_+ p_b p_c\big)  
\big( \tilde{g}{}^{de}_- p_d p_e \big)  
= 0
\end{equation}
where
\begin{gather}
\nonumber
\tilde{g}{}^{bc}_{\pm} 
= g^{bc} + \sigma _{\pm} F^{bd}F^c{}_d
\\
= \big( 1 + \sigma _{\pm} F \big) g^{bc} 
+ \sigma _{\pm} {}^{*} {\!} F^{bd} {}^{*} {\!} F^c{}_d
\label{eq:optmet}
\end{gather}
and 
\begin{equation}\label{eq:sigma}
\sigma _{\pm} =
\dfrac{N}{2M} \pm
\sqrt{\dfrac{N^2}{4M^2}-\dfrac{P}{M} \,} \, .
\end{equation}
$\tilde{g}{}^{bc}_{+}$ and $\tilde{g}{}^{bc}_{-}$ are 
known as the \emph{optical metrics}. Note that 
$\sigma_{\pm}$ is always real because $N^2-4MP$ can
be rewritten as the sum of two squares,
\begin{gather}\label{eq:sigmareal}
N^2-4MP= 
\\
\nonumber
\Big( \mathcal{L}_F \mathcal{L}_{GG}
 - 
N \Big) ^2
+ 4 \Big( \mathcal{L}_F \mathcal{L}_{FG}-PG \Big) ^2 \, .
\end{gather}

The determinant of $\tilde{g}{}^{cd}_{\pm}$ is
\begin{equation}\label{eq:nondeg}
\mathrm{det} \big( \tilde{g}{}^{cd}_{\pm} \big) =
\big( 1 + \sigma _{\pm} F - \sigma _{\pm}^2 G^2 \big)^2 
\mathrm{det} \big( g^{cd} \big)
\, .
\end{equation}
As $(g^{cd})$ is of Lorentzian signature, the right-hand side of
(\ref{eq:nondeg}) is either zero or negative. This demonstrates 
that the optical metrics
are either degenerate or Lorentzian (i.e., of signature $(-+++)$
or $(---+)$\big), as was already observed by Obukhov and 
Rubilar \cite{ObukhovRubilar2002}. If the determinant is non-zero,
the covariant components  of the optical metrics are
\begin{gather}
\nonumber
\big( \tilde{g}{}^{-1} \big)^{\pm}_{cd} = 
\dfrac{
g_{cd} - \sigma _{\pm} {}^{*} {\!} F_c{}^{b \,} {}^{*} {\!} F_{db}
}{
1+ \sigma _{\pm} F - \sigma _{\pm}^2 G^2
}
\\
=
\dfrac{
\big( 1 + \sigma _{\pm} F \big) g_{cd} - \sigma _{\pm} F_c{}^b F_{db}
}{
1+ \sigma _{\pm} F - \sigma _{\pm}^2 G^2
}
\, .
\label{eq:covoptmet}
\end{gather}
Indeed, with the help of the identities (\ref{eq:FFid})
it is easy to check that (\ref{eq:optmet}) and
(\ref{eq:covoptmet}) imply $\big( \tilde{g}{}^{-1} \big)
^{\pm}_{ac} \tilde{g}{}^{cb}_{\pm} = \delta _a^b$.

If $M=0$, the eikonal equation factorizes as well, but we
will not consider this case because it shows some pathologies,
see \cite{SchellstedePerlickLaemmerzahl2016}. We restrict for 
the rest of the paper to background fields for which 
$\mathcal{L}_F \neq0$, $M \neq 0$ and
$(1+\sigma _{\pm} F - \sigma _{\pm}^2 G^2 ) \neq 0$ so that we
have two optical metrics of Lorentzian signature. Then the eikonal
equation is of the form (\ref{eq:eikonalfact}), i.e., it requires
$p_a = \nabla _a S$ to be a null covector of at least one of the 
two optical metrics. This is true if and only if 
\begin{equation}\label{eq:eikonsigma}
p_ap^a + \sigma _{\pm} u_au^a =0
\end{equation}
holds with at least one of the two signs where $\sigma _{\pm}$ is
given by (\ref{eq:sigma}). We refer to the two equations 
(\ref{eq:eikonsigma}) with $p_a = \nabla _a S$ as to the
two \emph{partial eikonal equations}. 

We end this section with two useful results.
\begin{proposition}\label{prop:nobiref}
Let $\sigma$ be one of the two solutions, $\sigma = \sigma _+$ 
or $\sigma = \sigma _-$, to \emph{(\ref{eq:sigma})}. Then the following 
conditions are mutually equivalent:
\begin{itemize}
\item[\emph{(a)}]
$N^2=4MP$, i.e., the two optical metrics coincide, $\sigma = \dfrac{N}{2M}$ .
\item[\emph{(b)}]
$\mathcal{L}_F \mathcal{L}_{GG} = N$ and 
$\mathcal{L}_F \mathcal{L}_{FG}=PG$.
\item[\emph{(c)}]
$D M = \mathcal{L}_F^2$,
$D N = 2 \mathcal{L}_F ^2 \sigma$ and
$D P = \mathcal{L}_F^2 \sigma ^2$.
\item[\emph{(d)}]
$2 D \mathcal{L}_{FF} = 
\mathcal{L}_F \sigma ( 1+F \sigma )$,
$D \mathcal{L}_{GG} = 
2 \mathcal{L}_F \sigma$ and
$D \mathcal{L}_{FG} = 
\mathcal{L}_F  G \sigma ^2$.
\end{itemize}
In \emph{(c)} and \emph{(d)}, $D = 1 + F \sigma - G^2 \sigma ^2$.
\end{proposition}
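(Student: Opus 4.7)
The plan is to establish the chain $(a) \Leftrightarrow (b) \Rightarrow (c) \Leftrightarrow (d) \Rightarrow (a)$, exploiting the sum-of-squares identity (\ref{eq:sigmareal}) and the fact that (a) forces $\sigma$ to be a double root of $M\sigma^2 - N\sigma + P = 0$.

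First I would dispatch $(a) \Leftrightarrow (b)$ essentially for free: the identity (\ref{eq:sigmareal}) expresses $N^2-4MP$ as a sum of two real squares, so $N^2=4MP$ if and only if both bracketed quantities vanish, which is exactly (b). Next, for $(a) \Rightarrow (c)$, I would use Vieta's relations: if $\sigma$ is the unique double root of $M\sigma^2-N\sigma+P=0$, then $N=2M\sigma$ and $P=M\sigma^2$. Substituting these together with (b) (namely $\mathcal{L}_F\mathcal{L}_{GG}=2M\sigma$ and $\mathcal{L}_F\mathcal{L}_{FG}=M\sigma^2 G$) into the definition (\ref{eq:coeff1}) of $M$ gives
\begin{equation*}
M = \mathcal{L}_F^2 + 2M\sigma^2 G^2 - M\sigma F - M\sigma^2 G^2 = \mathcal{L}_F^2 + M(\sigma^2 G^2 - \sigma F),
\end{equation*}
which rearranges to $DM=\mathcal{L}_F^2$. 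The remaining two identities in (c) then follow immediately by multiplying $N=2M\sigma$ and $P=M\sigma^2$ by $D$.

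For $(c) \Rightarrow (d)$, I would combine (c) with (b) (which holds since (c) trivially implies $N^2=4MP$, hence (a), hence (b)): from $\mathcal{L}_F \mathcal{L}_{GG}=N$ and $\mathcal{L}_F\mathcal{L}_{FG}=PG$, multiplying by $D$ and using $DN=2\mathcal{L}_F^2\sigma$, $DP=\mathcal{L}_F^2\sigma^2$, one reads off $D\mathcal{L}_{GG}=2\mathcal{L}_F\sigma$ and $D\mathcal{L}_{FG}=\mathcal{L}_F G\sigma^2$. For the $\mathcal{L}_{FF}$ relation, solve (\ref{eq:coeff2}) for $2\mathcal{L}_F\mathcal{L}_{FF}=N-\tfrac{1}{2}\mathcal{L}_F\mathcal{L}_{GG}+PF$, substitute $N=2M\sigma$, $\mathcal{L}_F\mathcal{L}_{GG}=2M\sigma$, $P=M\sigma^2$ to get $2\mathcal{L}_F\mathcal{L}_{FF}=M\sigma(1+F\sigma)$, and multiply by $D$.

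Finally, $(d) \Rightarrow (c)$ is the computational core, running the same identities backwards. I would compute $D^2 P = (D\mathcal{L}_{FF})(D\mathcal{L}_{GG}) - (D\mathcal{L}_{FG})^2$, which evaluates to $\mathcal{L}_F^2 \sigma^2(1+F\sigma-G^2\sigma^2) = \mathcal{L}_F^2\sigma^2 D$, giving $DP=\mathcal{L}_F^2\sigma^2$. Then $DN$ and $DM$ are obtained by applying $D$ to the defining formulas (\ref{eq:coeff1}) and (\ref{eq:coeff2}) and substituting the three hypotheses in (d); the miraculous cancellations reduce $DM$ to $\mathcal{L}_F^2(1+F\sigma-G^2\sigma^2 - F\sigma + G^2\sigma^2) = \mathcal{L}_F^2$. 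Since (c) trivially yields $N^2=4MP$ via $(2M\sigma)^2=4M\cdot M\sigma^2$, this also closes the loop back to (a). The only real obstacle is bookkeeping the many cross-terms; the key trick throughout is the factor $D=1+F\sigma-G^2\sigma^2$ serving as a universal denominator, which lets one move cleanly between statements about $M,N,P$ and statements about the second derivatives of $\mathcal{L}$.
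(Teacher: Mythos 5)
Your proof is correct and follows essentially the same route as the paper: (a)$\Leftrightarrow$(b) from the sum-of-squares identity (\ref{eq:sigmareal}), Vieta's relations $N=2M\sigma$, $P=M\sigma^2$ for the double root, and substitution into (\ref{eq:coeff1})--(\ref{eq:coeff3}) to pass between (c) and (d), with the paper's standing assumptions $\mathcal{L}_F\neq 0$, $M\neq 0$, $D\neq 0$ justifying the divisions. As a side remark, your substitution of (b) into (\ref{eq:coeff1}) correctly yields $M=\mathcal{L}_F^2+PG^2-\tfrac{1}{2}NF$, whereas the paper's intermediate equation (\ref{eq:Msigma}) carries a sign typo ($-PG^2$); your version is the one consistent with $DM=\mathcal{L}_F^2$.
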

\begin{proof}
(a) $\Leftrightarrow$ (b) is obvious from (\ref{eq:sigmareal}). 
We now assume that one, and thus also the other, of these conditions is true.
Then we find from (a) that
\begin{equation}\label{eq:NPsigma}
N=2 M \sigma \, , \quad P = M \sigma ^2 
\end{equation}
and from inserting (b) into (\ref{eq:coeff1}) that
\begin{equation}\label{eq:Msigma}
M= \mathcal{L}_F^2-PG^2- \dfrac{NF}{2} \, 
\end{equation}
(\ref{eq:NPsigma}) and (\ref{eq:Msigma}) demonstrate that then (c) 
is true. Conversely, (c) obviously implies (a), so we have proven
that (a), (b), and (c) are mutually equivalent. Finally, we 
observe that (a) and (c) together with (\ref{eq:coeff2}) imply 
(d) and that (d), if inserted into (\ref{eq:coeff1}), (\ref{eq:coeff2}) 
and (\ref{eq:coeff3}), implies (a), so all four conditions are indeed 
mutually equivalent.
\end{proof}
\begin{proposition}\label{prop:eigen}
Assume that $p_a = \nabla _a S$ is a solution to the eikonal equation
$p_a p^a + \sigma u_au^a =0$ with $\sigma = \sigma _+$ or 
$\sigma = \sigma _-$. Then the eigenvalues of the matrix 
$\big( M_b{}^d \big)$ are 
$\lambda _1 = \lambda _2 =0$ and 
\begin{equation}\label{eq:lambda4}
\lambda _3 =
2 \mathcal{L}_F  \sigma u_au^a \, ,
\end{equation}
\begin{equation}\label{eq:lambda3}
\lambda _4 =
\big( 4 \mathcal{L} _F \sigma - 4 \mathcal{L}_{FF}
+ 4 \mathcal{L}_{FG} G \sigma -\mathcal{L}_{GG} (1+F \sigma ) \big) u_au^a
\end{equation}
\end{proposition}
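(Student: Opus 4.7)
The plan is to exploit the geometric structure of $M_b{}^d$ in (\ref{eq:M}) by decomposing the tangent space into $M$-invariant subspaces determined by the triple $(p_a, u_a, v_a)$ of (\ref{eq:puv}) together with a fourth direction $g$-orthogonal to all three, and to diagonalise $M$ on each piece.

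The first two eigenvalues would be read off directly. The relations $p \cdot u = p \cdot v = 0$ of (\ref{eq:puvorth}) give $M_b{}^d p_d = 0$ immediately, so $p^d$ is an eigenvector with $\lambda_1 = 0$. Then I would pick any vector $s^d$ that is $g$-orthogonal to $p^d, u^d$, and $v^d$, which is possible because these three span at most a three-dimensional subspace of the four-dimensional tangent space. Substitution into (\ref{eq:M}) annihilates every term except the first, giving $M_b{}^d s_d = -2\mathcal{L}_F (p_ap^a) s_b$. The eikonal equation then replaces $p_ap^a$ by $-\sigma u_au^a$ and reproduces $\lambda_3 = 2\mathcal{L}_F\sigma u_au^a$ as in (\ref{eq:lambda4}).

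For the remaining two eigenvalues I would show that the plane $\Pi := \mathrm{span}\{u^d, v^d\}$ is $M$-invariant. Applying $M_b{}^d$ to $u_d$ or $v_d$ kills the $p_b p^d$ term by (\ref{eq:puvorth}), while the remaining contractions with $u \cdot u$, $u \cdot v$, and $v \cdot v$, simplified through (\ref{eq:uvorth}), land back inside $\Pi$. Imposing $p_ap^a = -\sigma u_au^a$, the matrix of $M|_\Pi$ in the basis $(u^d, v^d)$ reads $(u_au^a)$ times the entries $\hat A, \hat B, \hat C, \hat D$ where $\hat A = 2\mathcal{L}_F\sigma - 4\mathcal{L}_{FF} + 2\mathcal{L}_{FG}G\sigma$, $\hat B = 2\mathcal{L}_{FG} - \mathcal{L}_{GG}G\sigma$, $\hat C = -4\mathcal{L}_{FF}G\sigma + 2\mathcal{L}_{FG}(1+F\sigma)$, and $\hat D = 2\mathcal{L}_F\sigma + 2\mathcal{L}_{FG}G\sigma - \mathcal{L}_{GG}(1+F\sigma)$. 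The trace $(u_au^a)(\hat A + \hat D)$ collapses directly to $\lambda_4$ of (\ref{eq:lambda3}); it then remains only to show that the determinant of this block vanishes, so that its eigenvalues are $0$ and $\lambda_4$, supplying the second zero $\lambda_2 = 0$.

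The main obstacle is this last determinant identity. Expanding $\hat A \hat D - \hat B \hat C$ and sorting the resulting monomials by powers of $\sigma$ and by the pattern of first and second derivatives of $\mathcal{L}$, I expect to arrive at
\[
\hat A \hat D - \hat B \hat C \;=\; 4 \bigl( M \sigma^2 - N \sigma + P \bigr),
\]
with $M, N, P$ as in (\ref{eq:coeff1})--(\ref{eq:coeff3}); the right-hand side then vanishes because $\sigma = \sigma_\pm$ is by construction a root of $M \sigma^2 - N \sigma + P = 0$ via (\ref{eq:sigma}). The computation itself is mechanical but tedious: the six bilinear monomials $\mathcal{L}_F^2$, $\mathcal{L}_F \mathcal{L}_{FF}$, $\mathcal{L}_F \mathcal{L}_{FG} G$, $\mathcal{L}_F \mathcal{L}_{GG}$, $\mathcal{L}_{FG}^2$, and $\mathcal{L}_{FF} \mathcal{L}_{GG}$ must each be tracked and their $F$, $G$, $\sigma$ prefactors shown to regroup exactly into the combinations defining $M$, $N$, and $P$.
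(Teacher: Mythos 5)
Your argument is correct, and I have checked the one identity you left as a computation: with $\hat A,\hat B,\hat C,\hat D$ as you define them one indeed finds $\hat A\hat D-\hat B\hat C=4\bigl(M\sigma^2-N\sigma+P\bigr)$, which vanishes because $\sigma_\pm$ are by (\ref{eq:sigma}) the roots of that quadratic; your trace $\hat A+\hat D$ likewise reproduces (\ref{eq:lambda3}) exactly. Your route is genuinely different from the paper's. The paper takes the two-dimensionality of the kernel as given (it was established earlier via the vanishing of the adjugate), and then extracts $\lambda_3+\lambda_4=M_b{}^b$ and $\lambda_3^2+\lambda_4^2=M_b{}^dM_d{}^b$ from the trace and the trace of the square, solving the resulting pair of equations. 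That is shorter and entirely basis-free, but it only delivers the unordered pair $\{\lambda_3,\lambda_4\}$ and says nothing about which eigenvalue belongs to which direction. Your block decomposition is more work but more informative: it exhibits $p_d$ and the $g$-orthogonal direction $s_d$ as honest eigenvectors, shows that $\mathrm{span}\{u_d,v_d\}$ is $M$-invariant carrying the eigenvalues $0$ and $\lambda_4$, and thereby essentially re-derives the content of Proposition \ref{prop:aux2b} (your $\hat A,\hat B,\hat C,\hat D$ are the entries appearing in (\ref{eq:Mu}) and (\ref{eq:Mv}); note in passing that your $\hat C$ is the correct one, the coefficient $-4\mathcal{L}_{GG}G\sigma$ printed in (\ref{eq:Mv}) should read $-4\mathcal{L}_{FF}G\sigma$). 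Moreover, you do not need to presuppose that zero is a double eigenvalue; you get the second zero from the vanishing of the $2\times2$ determinant.

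One small caveat: your construction of the fourth direction $s_d$ presumes that $\mathrm{span}\{p_d,u_d,v_d\}$ is nondegenerate with respect to $g_{ab}$, so that its $g$-orthocomplement supplies a vector outside that span. The Gram determinant is $-\sigma\,(u_au^a)^3 D$, so this holds whenever $u_au^a\neq0$, $\sigma\neq0$ and $D\neq0$; but if $\sigma=0$ (which occurs, e.g., in the Maxwell limit where $P=0$) your $s_d$ degenerates to a multiple of $p_d$ and the four vectors no longer form a basis. The conclusion survives there because $\lambda_3=2\mathcal{L}_F\sigma u_au^a=0$ and $M$ is block-triangular with respect to any completion of $\{p_d,u_d,v_d\}$ to a basis, with $-2\mathcal{L}_Fp_ap^a=0$ on the diagonal; similarly the case $u_au^a=0$, where $u_d$ and $v_d$ collapse onto $p_d$, gives a nilpotent $M_b{}^d$ with all eigenvalues zero, consistent with the claim. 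You should add a sentence covering these degenerate configurations; with that, the proof is complete.
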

\begin{proof}
By assumption, zero is a double-eigenvalue of the matrix (\ref{eq:M}). 
Then the remaining two eigenvalues  $\lambda _3$ and $\lambda _4$ can 
be determined in the following way. The formulas for the trace of a 
matrix and for the trace of the square of a matrix in terms of its 
eigenvalues yield
\begin{equation}\label{eq:trc1}
M_b{}^b = \lambda _3 + \lambda _4 \, ,
\end{equation}
\begin{equation}\label{eq:trc2}
M_b{}^dM_d{}^b = \lambda _3^2 + \lambda _4^2   \, .
\end{equation}
Upon calculating the traces with the help of (\ref{eq:puvorth}), 
solving (\ref{eq:trc1}) and (\ref{eq:trc2}) for the eigenvalues
results in the given expressions for $\lambda _3$ and $\lambda _4$.
\end{proof}
%

\subsection{Hamiltonian for rays and transport vector fields}
\label{subsec:rays}

We say that $S$ is a solution to the eikonal equation 
of multiplicity two if $p_a = \nabla _a S$ satisfies 
the equation (\ref{eq:eikonsigma}) with both signs, 
and we say that it is a solution of multiplicity one if 
(\ref{eq:eikonsigma}) holds with one sign but not with the
other. The multiplicity may change from point to point.

Each of the two partial eikonal equations has the form
of the Hamilton-Jacobi equation, $H(x,\nabla S)=0$,
with the Hamiltonian
\begin{equation}\label{eq:Hpart}
H_{\pm} (x,p ) = \dfrac{1}{2} \tilde{g}{}^{bc}_{\pm} (x) p_b p_c
\, .
\end{equation}
The solutions to Hamilton's equations
\begin{equation}\label{eq:Hampart}
\dot{x}{}^a \! = \! \dfrac{\partial H_{\pm}(x,p)}{\partial p_a}
\, , \;
\dot{p}{}_a \! = \! -  
\dfrac{\partial H_{\pm}(x,p)}{\partial x^a}
\, , \;
H(x,p) \! = \! 0 
\end{equation}
are known as the \emph{bicharacteristic curves} or as the
\emph{rays}. They are
the null geodesics of the optical metric.
Every solution $S$ to the eikonal equation 
is associated with a congruence of
rays whose tangent vector field is given by 
\begin{equation}\label{eq:Ka}
K^b_{\pm} (x) = \dfrac{\partial H_{\pm} (x,p)}{\partial p_b}
\Big| _{p = \nabla S(x)} 
=
\tilde{g}{}^{bc}_{\pm}(x)\nabla _c S(x) 
\, ,
\end{equation}
i.e.,
\begin{equation}\label{eq:Ksigma}
K^b_{\pm} = p^b - \sigma _{\pm} F^{bc} u_c \, .
\end{equation}
This vector field is known as the \emph{transport vector
field} associated with the solution $S$ of the eikonal
equation. For solutions of multiplicity two, we have 
two transport vector fields $K^b_+$ and $K^b_-$. However, 
they are always proportional to each other so that the
rays (as unparametrized curves) are uniquely determined.
We will prove this in the next section. Note that the non-degeneracy
of the optical metric implies that the transport
vector field cannot have zeros if we assume that $p_a = \nabla _a S$
has no zeros (as required for an eikonal function of an approximately
plane wave), i.e, that ``rays cannot stand still''.

The following proposition establishes a property of the transport vector 
field that will be crucial for the next section.
\begin{proposition}\label{prop:aux1}
Assume that $p_a = \nabla _a S$ satisfies the eikonal equation
$p_ap^a + \sigma u_au^a=0$ where $\sigma = \sigma _+$ or 
$\sigma = \sigma _-$. Let $\tilde{g}{}^{ab}=g^{ab} + \sigma F^{ac}F^b{}_c$
and $K^a = \tilde{g}{}^{ab}p_b$. Then 
\begin{gather}
\label{eq:optpuv1}
\tilde{g}{}^{cd} p_c u_d=
\tilde{g}{}^{cd} p_c v_d = 0 \, , \quad
\tilde{g}{}^{cd} u_c v_d= 0 \, ,
\\
\label{eq:optpuv2}
\tilde{g}{}^{cd} u_c u_d= 
\tilde{g}{}^{cd} v_c v_d= u^cu_c(1+\sigma F - \sigma ^2 G^2) \, .
\end{gather} 
As a consequence, the transport vector field satisfies 
\begin{equation}\label{eq:Kpuv}
K^ap_a = K^a u_a = K^a v_a = 0 \, .
\end{equation}
\end{proposition}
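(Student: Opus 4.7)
The plan is to reduce each identity in (\ref{eq:optpuv1}) and (\ref{eq:optpuv2}) to the scalars $p^ap_a$, $u^au_a$, $v^av_a$, $p^au_a$, $p^av_a$ and $u^av_a$, whose values are controlled by (\ref{eq:puvorth}), (\ref{eq:uvorth}) and the eikonal equation $p^ap_a+\sigma u^au_a=0$. First I would compile a short dictionary of all contractions of $F^{ab}$ and ${}^{\star\!}F^{ab}$ with $p_b$, $u_b$, $v_b$ that will appear: contracting the identities (\ref{eq:FFid}) with $p^a$, together with $F_{ac}p^a=-u_c$ and ${}^{\star\!}F_{ac}p^a=-v_c$, gives $F^{ab}p_b=u^a$, ${}^{\star\!}F^{ab}p_b=v^a$, $F^{ab}v_b=Gp^a$, ${}^{\star\!}F^{ab}u_b=Gp^a$, and $F^{ab}u_b-{}^{\star\!}F^{ab}v_b=-Fp^a$. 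The other tactical choice is to switch, identity by identity, between the two equivalent forms of $\tilde{g}^{cd}$ given in (\ref{eq:optmet}), so that the quadratic piece in $F$ or in ${}^{\star\!}F$ collapses via the dictionary.

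Given this preparation, the off-diagonal identities in (\ref{eq:optpuv1}) follow quickly using the $F$-form $\tilde{g}^{cd}=g^{cd}+\sigma F^{ce}F^d{}_e$. For $\tilde{g}^{cd}p_cu_d$ the $g^{cd}$-piece vanishes by (\ref{eq:puvorth}) and the quadratic piece reduces to a scalar multiple of the antisymmetric--symmetric contraction $F_{ef}u^eu^f=0$. For $\tilde{g}^{cd}p_cv_d$ the quadratic piece is proportional to $u^ep_e=0$. For $\tilde{g}^{cd}u_cv_d$ the $g^{cd}$-piece equals $-Gp^cp_c$ by (\ref{eq:uvorth}), while the quadratic piece, using $(F^{ce}u_c)\,p_e=u^au_a$ and $F^d{}_ev_d=-Gp_e$, evaluates to $-\sigma Gu^au_a$; the combination $-G(p^cp_c+\sigma u^au_a)$ then vanishes by the eikonal equation.

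For the diagonal identities (\ref{eq:optpuv2}) I would switch to the ${}^{\star\!}F$-form of $\tilde{g}^{cd}$ for the contraction $\tilde{g}^{cd}u_cu_d$: since ${}^{\star\!}F^{ab}u_b=Gp^a$, the quadratic piece collapses to $\sigma G^2p^ep_e$, and the eikonal substitution $p^ep_e=-\sigma u^eu_e$ produces the asserted factor $1+\sigma F-\sigma^2G^2$. For $\tilde{g}^{cd}v_cv_d$ I would revert to the $F$-form, where $F^{ab}v_b=Gp^a$ plays the analogous role; combining with $v^cv_c=u^cu_c-Fp^cp_c$ from (\ref{eq:uvorth}) and the eikonal equation yields the same factor. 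The consequences (\ref{eq:Kpuv}) are then immediate: $K^ap_a=\tilde{g}^{ab}p_ap_b=2H_{\pm}(x,\nabla S)=0$ by the eikonal equation, and $K^au_a$, $K^av_a$ are precisely the first two identities of (\ref{eq:optpuv1}). The only genuine obstacle is clerical---keeping antisymmetry signs and index positions of $F$ and ${}^{\star\!}F$ straight while moving between the two forms of $\tilde{g}^{cd}$; the conceptual content is simply that $p$, $u$ and $v$ are mutually $\tilde{g}$-orthogonal and that $u$ and $v$ share a common $\tilde{g}$-norm, with the recurring factor $1+\sigma F-\sigma^2G^2$ of (\ref{eq:nondeg}) arising from the eikonal equation applied to $\sigma G^2p^ep_e$.
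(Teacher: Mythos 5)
Your proposal is correct and follows exactly the route the paper intends: the paper's proof consists of the single remark that the claims ``can be verified in a straight-forward manner with the help of the identities (\ref{eq:FFid})'', and your computation is precisely that verification, carried out correctly (the dictionary $F^{ab}p_b=u^a$, ${}^{\star\!}F^{ab}u_b=F^{ab}v_b=Gp^a$, $F^{ab}u_b-{}^{\star\!}F^{ab}v_b=-Fp^a$ together with (\ref{eq:puvorth}), (\ref{eq:uvorth}) and the eikonal equation reproduces each of (\ref{eq:optpuv1}), (\ref{eq:optpuv2}) and (\ref{eq:Kpuv})). No gaps; you have simply made explicit what the paper leaves implicit.
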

\begin{proof}
This can be verified in a straight-forward manner with the help
of the identities (\ref{eq:FFid}).
\end{proof}

\subsection{Polarization condition}
\label{subsec:polcon}

If we fix a solution $p_a=\nabla _aS$ to the eikonal equation
$p_ap^a+ \sigma u_au^a = 0$ with $\sigma = \sigma _+$ or $\sigma = 
\sigma _-$, the zeroth-order field equation (\ref{eq:MA}) gives an 
algebraic restriction on $a^{(11)}_b$. This is the zeroth-order
polarization condition. In this section we investigate to what 
extent the polarization condition fixes the allowed values for 
$a^{(11)}_b$ and, thereby, for the lowest-order field-strength 
amplitude $f_{cd}^{(11)}$.

Thereby we have to distinguish solutions of multiplicity two from
solutions of multiplicity one. Clearly, if the two optical metrics 
coincide, $\sigma _+ = \sigma _-$, every solution is of multiplicity 
two. In a background field with $\sigma _+ \neq \sigma _-$, a 
solution is of multiplicity two if and only if $u_au^a =0$.
In this case $p_a$ is a \emph{principal null covector}, i.e.,
a covector with $p_ap^a=0$ for which $u_a$ and $v_a$ are
multiples of $p_a$. In the following proposition we determine
the general form of the matrix $M_b{}^d$ for this special case.
For more details on principal null solutions to the eikonal
equation we refer to Abalos et al. \cite{Abalosetal2015} where 
also pictures of the cones of the optical metrics can be found.

\begin{proposition}\label{prop:aux2a}
Assume that $p_a =\nabla _a S$ satisfies $p_ap^a=0$ and $u_au^a=0$.
Then $p_a$ is a solution of multiplicity two to the eikonal equation.
The covectors $u_a$ and $v_a$ are multiples of $p_a$,
\begin{equation}\label{eq:princnull}  
u_c = F_{c}{}^ap_a = \mu p_c \, , \quad
v_c = {}^{*} {\!} F_{c}{}^ap_a = \nu p_c \, ,
\end{equation}
where the coefficients $\mu$ and $\nu$ satisfy
\begin{gather}
\mu ^2 = - \dfrac{F}{2} + \sqrt{ \dfrac{F^2}{4}+G^2} 
\, , \quad
\nu ^2 = \dfrac{F}{2} + \sqrt{ \dfrac{F^2}{4}+G^2} \, ,
\nonumber
\\
\mu \nu  =  -G  \, .
\label{eq:munu}
\end{gather}
The transport vector fields are proportional to $p^a$,
\begin{equation}\label{eq:Kprinc}
K^a _{\pm} =  \xi _{\pm} \, p^a \, ,
\end{equation}
where 
\begin{equation}\label{eq:xi}
\xi _{\pm}= 
1 - \sigma _{\pm} \mu ^2 \, .
\end{equation}
The matrix $M_b{}^d$ reduces to
\begin{equation}\label{eq:Mprinc}
M_d{}^b= \Big( 2 \mathcal{L}_F - 4 \mathcal{L}_{FF} \mu ^2
+ 4  \mathcal{L}_{FG} \mu \nu
- \mathcal{L}_{GG} \nu ^2 \Big) p_b p^d \, .
\end{equation}
\end{proposition}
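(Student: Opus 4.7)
The plan is to prove the five assertions in the order they appear, with the bulk of the work concentrated on the principal-null relations (\ref{eq:princnull})--(\ref{eq:munu}); the formulas for $K^a_\pm$ and $M_b{}^d$ then drop out by direct substitution. The multiplicity-two assertion is immediate, since under the hypotheses $p_ap^a = 0$ and $u_au^a = 0$ the partial eikonal equation $p_ap^a + \sigma\, u_au^a = 0$ is satisfied identically, in particular for both roots $\sigma_\pm$ of (\ref{eq:sigma}).

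For the principal-null structure, I would first observe that the algebraic identities (\ref{eq:uvorth}) reduce under $p_ap^a = 0$ to $u_cu^c = v_cv^c$ and $u_cv^c = 0$; combined with the hypothesis $u_cu^c = 0$ this gives $v_cv^c = 0$ as well. Together with the general orthogonality (\ref{eq:puvorth}), the three covectors $p_a$, $u_a$, $v_a$ are pairwise orthogonal and all null. The standard fact that, in a four-dimensional Lorentzian vector space, the orthogonal complement of a null covector is a three-dimensional subspace whose induced metric has signature $(0,+,+)$, with the unique null direction spanned by the covector itself, then forces $u_c = \mu p_c$ and $v_c = \nu p_c$ for some scalars $\mu, \nu$. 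To pin down these scalars, I contract the two identities (\ref{eq:FFid}) with $p_b$: using $F_{ac}p^c = u_a = \mu p_a$ and $^{\star\!}F_{ac}p^c = v_a = \nu p_a$, the first identity yields $\mu^2 - \nu^2 = -F$ while the second yields $\mu\nu = -G$. Treating these as a system for $\mu^2$ and $\nu^2$ gives the quadratic $\nu^2(\nu^2 - F) = G^2$, and the non-negativity of $\mu^2$ and $\nu^2$ selects the positive square root; the solutions are exactly the expressions stated in (\ref{eq:munu}).

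The remaining formulas are routine substitutions. The transport vector field (\ref{eq:Ksigma}) becomes
\begin{equation*}
K^a_\pm \;=\; p^a - \sigma_\pm F^{ac}u_c \;=\; p^a - \sigma_\pm \mu \, F^{ac}p_c \;=\; (1 - \sigma_\pm \mu^2)\, p^a ,
\end{equation*}
where I have invoked $F^{ac}p_c = u^a = \mu p^a$; this is (\ref{eq:Kprinc}) together with (\ref{eq:xi}). For $M_b{}^d$, inserting $u_a = \mu p_a$ and $v_a = \nu p_a$ into (\ref{eq:M}), together with $p_ap^a = 0$, makes every remaining term proportional to $p_b p^d$, and collecting the Lagrangian coefficients produces (\ref{eq:Mprinc}). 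The one place where careful bookkeeping is needed is the application of the identities (\ref{eq:FFid}) to $p^b$, because $F_{ac}p^c$ and $F^{bc}p_b$ differ by a sign owing to the antisymmetry of $F_{ab}$; this is the main source of potential arithmetic error but presents no genuine obstacle, as everything beyond this is mechanical.
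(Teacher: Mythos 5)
Your proof is correct and establishes every assertion of the proposition. The one point where you genuinely deviate from the paper is the key step showing that $u_a$ and $v_a$ are proportional to $p_a$: the paper routes this through Proposition~\ref{prop:aux1}, reading off from (\ref{eq:optpuv1}) and (\ref{eq:optpuv2}) that $u_a$ and $v_a$ are null and orthogonal to $p_a$ with respect to either \emph{optical} metric, and then applies the Lorentzian lemma (two orthogonal null vectors are linearly dependent) there; you apply the very same lemma but with respect to the \emph{spacetime} metric, using only (\ref{eq:puvorth}) together with the reduction of (\ref{eq:uvorth}) under $p_ap^a=0$ to conclude $v_cv^c=u_cu^c=0$. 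Your route is slightly more economical, since it needs neither Proposition~\ref{prop:aux1} nor the non-degeneracy of the optical metrics, whereas the paper's version keeps the argument inside the optical-metric framework that is reused in Proposition~\ref{prop:aux2b}; the two are otherwise equivalent, resting on the same lemma. Beyond that, your explicit derivation of (\ref{eq:munu}) by contracting the identities (\ref{eq:FFid}) with $p_b$ fills in a step the paper dispatches with ``follows from (\ref{eq:FFid})'' --- and your closing caveat about the relative sign between $F_{ac}p^c$ and $F^{bc}p_b$ is well placed, as that contraction is indeed the only spot where a sign can go astray --- while the final substitutions into (\ref{eq:Ksigma}) and (\ref{eq:M}) coincide with the paper's proof.
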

\begin{proof}
If $p_ap^a=0$ and $u_au^a=0$, (\ref{eq:eikonsigma}) is trivially satisfied
with both signs, i.e., the covector $p_a$ is lightlike with respect to both 
optical metrics. Moreover, we read from (\ref{eq:optpuv1}) and (\ref{eq:optpuv2}) 
that with respect to either of the two optical metrics the covectors $u_a$ and 
$v_a$ are orthogonal to $p_a$ and lightlike. As two lightlike vectors
are orthogonal with respect to a Lorentzian metric if and only if they
are linearly dependent, this proves that (\ref{eq:princnull}) has to hold with
some coefficients $\mu$ and $\nu$. Then (\ref{eq:munu}) follows from (\ref{eq:FFid}). 
Inserting (\ref{eq:princnull}) into (\ref{eq:Ksigma}) and (\ref{eq:M}), respectively,
yields (\ref{eq:Kprinc}) and (\ref{eq:Mprinc}).
\end{proof}
Recall that the eikonal equation requires the kernel of $M_b{}^d$ to be at
least two-dimensional. Proposition (\ref{prop:aux2a}) implies that the kernel
is even three-dimensional if $u_au^a=0$. We will now consider the case
$u_au^a \neq 0$.
\begin{proposition}\label{prop:aux2b}
Assume that $p_a = \nabla _a S$ is a solution to one of the two eikonal
equations, $p_ap^a + \sigma u_au^a =0$ where $\sigma$ stands for $\sigma _+$ or 
for $\sigma _-$. Let $\tilde{g}^{ab} = g^{ab} + \sigma F^{ac}F^b{}_c$ be the 
corresponding optical metric and $K^a=\tilde{g}^{ab}p_b$ be the corresponding
transport vector field. If $u_au^a \neq 0$, the three covectors $p_a$, $u_a$ 
and $v_a$ are linearly independent. They span the orthocomplement of $p_a$
with respect to $\tilde{g}^{ab}$. The kernel of the matrix $M_b{}^d$ consists
of all covectors 
\begin{equation}\label{eq:alphabetagamma}
a^{(11)}_b = \alpha u_b + \beta v_b + \gamma p_b 
\end{equation}
where $\gamma$ is arbitrary and $\alpha$ and $\beta$ satisfy
\begin{equation}\label{eq:alphabeta}
\begin{pmatrix}
m_1{}^1 & m_1{}^2 \\ m_2{}^1 & m_2{}^2
\end{pmatrix}
\begin{pmatrix}
\alpha
\\
\beta
\end{pmatrix}
=
\begin{pmatrix}
0
\\
0
\end{pmatrix}
\end{equation}
where
\begin{gather}
\begin{pmatrix}
\nonumber
m_1{}^1 & m_1{}^2 \\[0.1cm] m_2{}^1 & m_2{}^2
\end{pmatrix}
=
2 \mathcal{L}_F \sigma
\begin{pmatrix}
1 & 0 \\[0.1cm] 0 & 1
\end{pmatrix}
\\[0.2cm]
-
\begin{pmatrix}
-4 \mathcal{L}_{FF} & 2\mathcal{L}_{FG}
\\[0.1cm]
2 \mathcal{L}_{FG} & - \mathcal{L}_{GG}
\end{pmatrix}
\begin{pmatrix}
1 & G \sigma
\\[0.1cm]
\; G \sigma \; & 1+F \sigma
\end{pmatrix}
\, .
\label{eq:m}
\end{gather}
The kernel is three-dimensional if and only if $p_a = \nabla _aS$ is 
a solution of multiplicity two. The kernel is then spanned by $p_d$, $u_d$
and $v_d$, i.e., it coincides with the orthocomplement of $p_b$ with
respect to the optical metric.
\end{proposition}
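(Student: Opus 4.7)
The plan is to address the four claims in sequence, using only the algebraic identities (\ref{eq:FFid}), (\ref{eq:uvorth}) and the eikonal equation $p_ap^a=-\sigma\, u_au^a$.

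For linear independence, I would start from $\alpha u_a+\beta v_a+\gamma p_a=0$ and contract with $u^a$ and $v^a$. Using (\ref{eq:puvorth}) and (\ref{eq:uvorth}), these two contractions form a homogeneous $2\times 2$ system in $(\alpha,\beta)$ whose determinant is $(u_au^a)^2(1+\sigma F-\sigma^2G^2)$; under the standing assumptions $u_au^a\neq 0$ and $1+\sigma F-\sigma^2G^2\neq 0$, this forces $\alpha=\beta=0$, after which $\gamma p_a=0$ together with $p_a\neq 0$ gives $\gamma=0$. The orthocomplement claim then follows at once: $p_a$ is $\tilde g$-null by the eikonal equation and $u_a,v_a$ are $\tilde g$-orthogonal to $p_a$ by (\ref{eq:optpuv1}), so the three linearly independent covectors span the three-dimensional $\tilde g$-orthocomplement of $p_a$.

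To pin down the kernel, I would first establish that any kernel element lies in $\mathrm{span}(p_a,u_a,v_a)$: reading off from (\ref{eq:M}), the $p_b$-component of $M_b{}^d w_d$ for an arbitrary covector $w_d$ is proportional to $p^aw_a$, and its vanishing forces $w_a$ into the $g$-orthocomplement of $p_a$, which coincides with $\mathrm{span}(p_a,u_a,v_a)$ by the previous step. Substituting the ansatz $a_d^{(11)}=\alpha u_d+\beta v_d+\gamma p_d$ into $M_b{}^d a_d^{(11)}=0$, every scalar contraction reduces via (\ref{eq:puvorth}), (\ref{eq:uvorth}) and the eikonal equation, and the $p_b$-contributions of the terms $2\mathcal L_Fp_bp^d$ and $-2\mathcal L_F\,p\cdot p\,\delta_b^d$ cancel identically on the $\gamma p_d$ piece, so $\gamma$ remains free. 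Requiring the coefficients of the linearly independent $u_b$ and $v_b$ to vanish then produces the $2\times 2$ linear system (\ref{eq:alphabeta})--(\ref{eq:m}).

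The kernel dimension is therefore $1+\dim\ker(m)$, where $m$ is the $2\times 2$ matrix in (\ref{eq:m}). Since the eikonal equation was derived as the vanishing of the adjugate of $M_b{}^d$, we already have $\dim\ker M_b{}^d\ge 2$, so $m$ has rank at most one. Hence the kernel is three-dimensional if and only if $m=0$. For the direction ``multiplicity two $\Rightarrow m=0$'' I would substitute the expressions for $\mathcal L_{FF},\mathcal L_{FG},\mathcal L_{GG}$ from Proposition \ref{prop:nobiref}(d) into each entry of (\ref{eq:m}) and check that they all vanish. For the converse, the four scalar relations $m_i{}^j=0$ can be rearranged, using $D=1+\sigma F-\sigma^2G^2\neq 0$, into exactly the three identities of Proposition \ref{prop:nobiref}(d), which is equivalent to $\sigma_+=\sigma_-$, i.e., to multiplicity two. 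In this case the three-dimensional kernel, being contained in $\mathrm{span}(p_a,u_a,v_a)$, must coincide with it, i.e., with the $\tilde g$-orthocomplement of $p_a$.

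The main obstacle I foresee is the converse direction of the dimension claim: the four entries $m_i{}^j=0$ look independent, so collapsing them to the three identities of Proposition \ref{prop:nobiref}(d)---a redundancy that ultimately reflects the self-adjointness of $M_b{}^d$---requires careful algebraic bookkeeping.
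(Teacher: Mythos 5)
Your overall strategy coincides with the paper's: reduce the kernel computation to the invariant two-space spanned by $u_b$ and $v_b$, obtain the $2\times 2$ system (\ref{eq:alphabeta})--(\ref{eq:m}), observe that the eikonal equation forces the determinant of (\ref{eq:m}) to vanish so that the kernel of $M_b{}^d$ is three-dimensional exactly when that matrix is zero, and translate the vanishing of (\ref{eq:m}) into condition (d) of Proposition \ref{prop:nobiref} by multiplying with the inverse of its second factor, whose determinant is $D=1+F\sigma-G^2\sigma^2\neq 0$. Your linear-independence argument via the Gram determinant $(u_au^a)^2D$, your identification of the free direction $\gamma p_d$, and your handling of the equivalence with multiplicity two are all correct and parallel the paper's proof.

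There is, however, a genuine flaw in the step where you confine the kernel to $\mathrm{span}(p_a,u_a,v_a)$. You assert that the vanishing of the $p_b$-component of $M_b{}^dw_d$ forces $p^aw_a=0$, and that the $g$-orthocomplement of $p_a$ coincides with $\mathrm{span}(p_a,u_a,v_a)$. The latter is false whenever $\sigma\neq 0$: the eikonal equation gives $p_ap^a=-\sigma u_au^a\neq 0$, so $p_a$ is not $g$-null and does not lie in its own $g$-orthocomplement; only the $\tilde g$-orthocomplement equals $\mathrm{span}(p_a,u_a,v_a)$, and $\tilde g$-orthogonality to $p_a$ means $g$-orthogonality to $K^a$, which differs from $p^a$. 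Moreover, the term $-2\mathcal{L}_F\,p_cp^c\,\delta_b^d$ in (\ref{eq:M}) contributes $-2\mathcal{L}_F(p_cp^c)\,w_b$ to $M_b{}^dw_d$, so the ``$p_b$-component'' is not simply proportional to $p^aw_a$. The conclusion you want is nevertheless true and can be rescued with a case distinction: for $w_d$ outside $\mathrm{span}(p_d,u_d,v_d)$ one has $M_b{}^dw_d=-2\mathcal{L}_F(p_cp^c)\,w_b$ plus an element of $\mathrm{span}(p_b,u_b,v_b)$; if $\sigma\neq 0$ the coefficient of the independent direction $w_b$ is already nonzero, while if $\sigma=0$ then $p_a$ is $g$-null, $\mathrm{span}(p_a,u_a,v_a)$ really is the $g$-orthocomplement, hence $p^aw_a\neq 0$ and the coefficient $2\mathcal{L}_Fp^aw_a$ of $p_b$ is nonzero. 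The paper avoids the case distinction by completing $p_a,u_a,v_a$ to a Newman--Penrose-type tetrad with a covector $w_a$ normalized by $\tilde g^{ab}w_ap_b=1$ and reading off from (\ref{eq:Mw}) that $M_b{}^dw_d\neq 0$ in either case; you need to supply one of these two arguments explicitly.
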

\begin{proof}
Our assumption that $u_au^a \neq 0$ implies, by (\ref{eq:optpuv1}) and 
(\ref{eq:optpuv2}) together with $\tilde{g}^{ab}p_ap_b=0$, that $p_a$, $u_a$
and $v_a$ are linearly independent and that they span the 
$\tilde{g}^{ab}$-orthocomplement of $p_a$. After normalizing $u_a$ and $v_a$
with the help of (\ref{eq:optpuv2}) we may complement these three covectors
to a Newman-Penrose tetrad by choosing a covector $w_a$ with
\begin{gather}
\tilde{g}^{ab}w_ap_b=1 \, , \quad
\tilde{g}^{ab}w_aw_b= 0 \, ,
\nonumber
\\
\tilde{g}^{ab}w_au_b= 0 \, , \quad 
\tilde{g}^{ab}w_av_b=0 \, .
\label{eq:w}
\end{gather}
From (\ref{eq:M}) we calculate with the help of (\ref{eq:uvorth})  
\begin{gather}
M_b{}^dw_d= 2 \mathcal{L}_F \sigma u_au^a w_b 
\nonumber
\\
+ 2 \mathcal{L}_F  \big( 1+ \sigma F^{fg} F^e{}_g w_fp_e \big) p_b 
\, ,
\label{eq:Mw}
\end{gather}
\begin{equation}\label{eq:Mp}
M_b{}^dp_d= 0 \, ,
\end{equation}
\begin{gather}
M_b{}^du_d= u_au^a  
\big(2 \mathcal{L} _F \sigma - 4 \mathcal{L}_{FF}
+ 2 \mathcal{L}_{FG} G \sigma \big) u_b 
\nonumber
\\
+
u_au^a \big(2 \mathcal{L} _{FG}  - \mathcal{L}_{GG}  G \sigma \big) v_b 
\, ,
\label{eq:Mu}
\end{gather}
\begin{gather}
M_b{}^dv_d= u_au^a 
\big(2 \mathcal{L} _{FG} (1+F \sigma)- 4 \mathcal{L}_{GG}  G \sigma \big) u_b 
\nonumber
\\
+ 
u_au^a
\big(2 \mathcal{L} _F \sigma - \mathcal{L}_{GG} (1+F \sigma )
+ 2 \mathcal{L}_{FG} G \sigma \big) v_b 
\, .
\label{eq:Mv}
\end{gather}
The first two equations (\ref{eq:Mw}) and (\ref{eq:Mp}) demonstrate that 
$M_b{}^d$ leaves the two-space spanned by $w_d$ and $p_d$ invariant and 
that it has a one-dimensional kernel on this two-space. The last statement 
follows from the fact that $w_d$ is not in the kernel: It is mapped onto a 
covector $M_b{}^dw_d$ that is non-zero if $\sigma =0$ (because then it is a 
non-zero multiple of $p_b$) and also if $\sigma \neq 0$ (because then it has 
a non-zero component in the direction of $w_b$). The other two equations
(\ref{eq:Mu}) and (\ref{eq:Mv}) demonstrate that the two-space spanned by
$u_d$ and $v_d$ is left invariant as well. On this two-space the matrix 
$M_b{}^d$ must have a one-dimensional or two-dimensional kernel because
the eikonal equation requires that the kernel of the full matrix $M_b{}^d$ 
is at least two-dimensional. By (\ref{eq:Mu}) and (\ref{eq:Mv}), a
covector $\alpha u_b + \beta v_b$ is in the kernel if and only if 
(\ref{eq:alphabeta}) holds with (\ref{eq:m}). The determinant of the 
matrix (\ref{eq:m}) vanishes as a consequence of the eikonal equation.
Clearly, a ($2 \times 2$)-matrix has a two-dimensional kernel if and 
only if it is the zero matrix. The matrix (\ref{eq:m}) is the zero matrix 
if and only if the symmetric matrix 
\begin{gather}
\nonumber
\dfrac{1}{u_au^a D}
\begin{pmatrix}
m_1{}^1 & m_1{}^2 \\[0.1cm] m_2{}^1 & m_2{}^2
\end{pmatrix}
\begin{pmatrix}
1 + F \sigma & -G \sigma \\[0.1cm] - G \sigma & 1
\end{pmatrix} =
\\[0.2cm]
\dfrac{2 \mathcal{L}_F \sigma}{D}
\begin{pmatrix}
1+ F \sigma & - G \sigma \\[0.1cm] - G \sigma & 1 
\end{pmatrix}
+
\begin{pmatrix}
- 4 \mathcal{L}_{FF} & 2 \mathcal{L}_{FG} \\[0.1cm]
2 \mathcal{L}_{FG} & - \mathcal{L}_{GG}
\end{pmatrix}
\label{eq:msym}
\end{gather}
is the zero matrix, where $D=1+F \sigma -G^2 \sigma ^2$. By comparison 
with part (d) of Proposition \ref{prop:nobiref} we see that this is the 
case if and only if the two optical metrics coincide. As we assume that 
$u_au^a \neq 0$ this is true if and only if $p_a=\nabla _a S$ 
is a solution of multiplicity two.
\end{proof}
With these results at hand it is now easy to evaluate the polarization
condition. We do this first for solutions of multiplicity two.
\begin{proposition}\label{prop:polconmult2}
Let $p_a= \nabla _a S$ be a solution of multiplicity two
to the eikonal equation, i.e.
$p_ap^a + \sigma _+ u_au^a = 0$ and
$p_ap^a + \sigma _- u_au^a = 0$. Then the 
two transport vector fields $K^a_+ = p^a - \sigma _+ F^{ab}u_b$
and $K^a_- = p^a - \sigma _-  F^{ab}u_b$ are linearly 
dependent. The polarization condition $M_b{}^d a^{(11)}_d=0$ is 
equivalent to $K^d_{\pm} a^{(11)}_d =0$ (which holds with one sign if and
only if it holds with the other sign), i.e., it restricts $a^{(11)}_d$ 
to a three-dimensional subspace which contains $p_a$.
\end{proposition}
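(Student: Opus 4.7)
The plan is to reduce everything to the two cases $u_au^a \neq 0$ and $u_au^a = 0$, invoking Propositions \ref{prop:aux1}, \ref{prop:aux2a}, \ref{prop:aux2b} and \ref{prop:nobiref}, and then to reinterpret the resulting kernel condition as a single linear equation coming from the transport vector field.

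First I would handle the linear dependence of $K^a_+$ and $K^a_-$. If $u_au^a \neq 0$, then subtracting the two partial eikonal equations $p_ap^a + \sigma_\pm u_au^a = 0$ yields $(\sigma_+ - \sigma_-)u_au^a = 0$, hence $\sigma_+ = \sigma_-$ and trivially $K^a_+ = K^a_-$. If instead $u_au^a = 0$, Proposition \ref{prop:aux2a} applies directly and gives $K^a_\pm = \xi_\pm p^a$, so both transport vectors are proportional to $p^a$. In either case $K^a_+$ and $K^a_-$ are parallel, which in particular means the condition $K^d_+ a^{(11)}_d = 0$ holds iff $K^d_- a^{(11)}_d = 0$.

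Next I would determine the kernel of $M_b{}^d$ and show it is three-dimensional and equal to the $\tilde{g}^{ab}_\pm$-orthocomplement of $p_a$. In the case $u_au^a \neq 0$, multiplicity two forces $\sigma_+ = \sigma_-$, so Proposition \ref{prop:nobiref}(d) applies; inserting those three identities into the symmetric form (\ref{eq:msym}) makes it vanish, hence the matrix (\ref{eq:m}) is the zero matrix and every pair $(\alpha,\beta)$ satisfies (\ref{eq:alphabeta}). By Proposition \ref{prop:aux2b} the kernel is then spanned by $u_d, v_d, p_d$; by Proposition \ref{prop:aux1} these three covectors span exactly the $\tilde{g}^{ab}$-orthocomplement of $p_a$, so a covector lies in the kernel if and only if $\tilde{g}^{cd}p_c a^{(11)}_d = K^d a^{(11)}_d = 0$. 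In the case $u_au^a = 0$, Proposition \ref{prop:aux2a} gives $M_b{}^d = C\, p_b p^d$ with $C = 2\mathcal{L}_F - 4\mathcal{L}_{FF}\mu^2 + 4\mathcal{L}_{FG}\mu\nu - \mathcal{L}_{GG}\nu^2$; when $C \neq 0$ the kernel is precisely $\{a^{(11)}_d : p^d a^{(11)}_d = 0\}$, and since $K^a_\pm = \xi_\pm p^a$ this is the same as $K^d_\pm a^{(11)}_d = 0$ (assuming $\xi_\pm \neq 0$, which can be excluded from the physically relevant regime).

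Finally, I would note the structural consequences: the kernel is three-dimensional by the two case analyses, and it contains $p_d$ because $M_b{}^d p_d = 0$ by (\ref{eq:Mp}), equivalently because $\tilde{g}^{cd}p_c p_d = 0$ is the eikonal equation itself. The main obstacle is the consistent handling of the principal null case $u_au^a = 0$: there one must separately verify that the scalar $C$ (and the factor $\xi_\pm$) do not vanish, since otherwise the kernel jumps to full dimension and the stated three-dimensional conclusion would need an additional genericity assumption on the background.
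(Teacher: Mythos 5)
Your proof is correct and follows essentially the same route as the paper, whose own proof of this proposition is just the two-line case split "if $u_au^a=0$ use Proposition \ref{prop:aux2a}, if $u_au^a\neq 0$ use Proposition \ref{prop:aux2b}" — you have simply unpacked what those propositions deliver (the vanishing of the matrix (\ref{eq:m}) via Proposition \ref{prop:nobiref}(d), and the identification of the kernel with the $\tilde{g}$-orthocomplement of $p_a$ via Proposition \ref{prop:aux1}). Your caveat about the scalar $C = 2\mathcal{L}_F - 4\mathcal{L}_{FF}\mu^2 + 4\mathcal{L}_{FG}\mu\nu - \mathcal{L}_{GG}\nu^2$ needing to be nonzero in the principal-null case is a fair point that the paper itself glosses over (the non-vanishing of $\xi_\pm$, by contrast, already follows from the assumed non-degeneracy of the optical metric).
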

\begin{proof}
If $u_au^a=0$, this follows from Proposition \ref{prop:aux2a}. If
$u_au^a \neq 0$ it follows from Proposition \ref{prop:aux2b}.
\end{proof}
We now prove the analogous statement for solutions of multiplicity one.
\begin{proposition}\label{prop:polconmult1}
Let $p_a= \nabla _a S$ be a solution of multiplicity one
to the eikonal equation, i.e.
$p_ap^a + \sigma u_au^a = 0$ with $\sigma = \sigma _+$ or
$\sigma = \sigma _-$ but not with both. Then the polarization 
condition $M_b{}^da^{(11)}_d=0$ is true if and only if $a_b^{(11)}
= \alpha u_b + \beta v_b + \gamma p_b$ where $\alpha$ and $\beta$
satisfy \emph{(\ref{eq:alphabeta})} with \emph{(\ref{eq:m})}. This condition
restricts $a^{(11)}_b$ to a two-dimensional subspace that contains
$p_b$.
\end{proposition}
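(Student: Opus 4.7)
The plan is to reduce everything to Propositions \ref{prop:aux2a} and \ref{prop:aux2b}, which have already analyzed the kernel of $M_b{}^d$ case by case. First I would rule out the degenerate case $u_a u^a = 0$: by Proposition \ref{prop:aux2a}, whenever $u_a u^a = 0$ the covector $p_a = \nabla_a S$ automatically satisfies the eikonal equation with both signs (and so is of multiplicity two). Since we have assumed multiplicity one, we must have $u_a u^a \neq 0$, and in particular the three covectors $p_a, u_a, v_a$ are linearly independent (this independence is also part of Proposition \ref{prop:aux2b}).

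Next I would invoke the kernel description in Proposition \ref{prop:aux2b} directly. On the two-dimensional invariant subspace spanned by $w_d$ and $p_d$ the matrix $M_b{}^d$ has a one-dimensional kernel (namely, the line through $p_d$), and on the invariant two-dimensional subspace spanned by $u_d$ and $v_d$ a vector $\alpha u_b + \beta v_b$ lies in the kernel precisely when (\ref{eq:alphabeta}) with (\ref{eq:m}) holds. Since these two invariant subspaces are complementary, every kernel element has the form (\ref{eq:alphabetagamma}) with $\gamma$ free and $(\alpha,\beta)$ a solution of (\ref{eq:alphabeta}). This proves the asserted form of $a_b^{(11)}$.

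It remains to determine the dimension of the space of allowed $(\alpha,\beta)$. The key observation, again from Proposition \ref{prop:aux2b}, is that the $(2\times 2)$-matrix in (\ref{eq:m}) is the zero matrix if and only if the two optical metrics coincide (this is where part (d) of Proposition \ref{prop:nobiref} is used). Under our multiplicity-one assumption $\sigma_+ \neq \sigma_-$, so the matrix is not the zero matrix; but its determinant still vanishes because the eikonal equation (which is exactly the vanishing of that determinant for $\sigma = \sigma_\pm$) is satisfied. Hence its kernel is exactly one-dimensional, so $(\alpha,\beta)$ is determined up to an overall scalar. Combined with the free parameter $\gamma$, this gives a two-dimensional solution space for $a_b^{(11)}$, which manifestly contains $p_b$ (take $\alpha = \beta = 0$).

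The only step that requires any care beyond citing earlier propositions is the dimension count at the end — specifically the argument that the $(2\times 2)$ matrix in (\ref{eq:m}) is non-zero but has vanishing determinant under the multiplicity-one hypothesis. Everything else is essentially a transcription of the structural results already established in Propositions \ref{prop:nobiref}, \ref{prop:aux2a}, and \ref{prop:aux2b}.
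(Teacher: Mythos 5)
Your proof is correct and follows essentially the same route as the paper, which disposes of this proposition in one line as an immediate consequence of Proposition \ref{prop:aux2b} (whose proof already contains your dimension count, including the observation that the matrix in (\ref{eq:m}) vanishes iff the optical metrics coincide). Your explicit preliminary step — using Proposition \ref{prop:aux2a} to show that multiplicity one forces $u_au^a \neq 0$, so that Proposition \ref{prop:aux2b} applies — is a detail the paper leaves implicit, and it is a worthwhile clarification.
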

\begin{proof}
This is an immediate consequence of Proposition \ref{prop:aux2b}).
\end{proof}
We summarize the results of this section in the following way. For every 
solution $p_a=\nabla _aS$ to the eikonal equation the polarization condition 
requires that $a^{(11)}_b$ satisfies $K^ba^{(11)}_b =0$ where $K^b$ is the
corresponding transport vector field. This may be interpreted as a transversality
condition. For a solution of multiplicity two there is no additional restriction,
i.e., $a^{(11)}_b$ is confined to a three-dimensional subspace that contains
$p_b$. By contrast, for a solution of multiplicity one the polarization
condition restricts $a^{(11)}_b$ to a two-dimensional space that contains $p_b$,
i.e., it fixes the polarization plane (the plane spanned by $a_b^{(11)}$ and $p_b$)
uniquely.

\section{Evaluation of the first-order field  equation}\label{sec:firstorder}

We now turn to the first-order field equation which gives us the
two conditions (\ref{eq:Max2first1a}) and (\ref{eq:Max2first2a}). We can write
them, in a slightly more compact form, as
\begin{equation}\label{eq:Max2first1b}
0 = \nabla ^a \Big( \chi _{ab}{}^{cd} p_c \, a_d^{(11)} \Big)
+  p ^a  \, \chi _{ab}{}^{cd} \nabla _c a_d^{(11)}
+ i \, M_b{}^d \, a_d ^{(21)} \, ,
\end{equation}
\begin{equation}\label{eq:Max2first2b}
0 = M_b{}^d \, a_d ^{(22)}
- \psi _{ab}{}^{cdef} p ^a  \, p _c \, a_d ^{(11)} a_f^{(11)} \, . 
\end{equation}
We want to determine what kind of information these equations give us
on the polarization plane spanned by $a_b^{(11)}$ and $p_b$.

We know from the preceding section that for a solution of multiplicity one this
plane is already uniquely fixed at the zeroth-order level, so the first-order 
equations cannot give us any additional information on this plane. One just has to 
check for consistency, i.e., one has to verify that the sum of the first two terms 
in (\ref{eq:Max2first1b}) is in the image space of $M_b{}^d$ and that the second term 
in (\ref{eq:Max2first2b}) is in the image space of $\psi _{ab}{}^{cdef}$. Then 
(\ref{eq:Max2first1b}) and (\ref{eq:Max2first2b}) give us polarization conditions on
$a_d^{(21)}$ and $a_d^{(22)}$. We have 
already emphasized that (\ref{eq:Max2first2b}) is not in general satisfied by 
$a_d^{(22)} = 0$, i.e., that frequency doubling has to be taken into account if 
the field equation should hold at first order, and that at the next order in general
also a non-zero $a^{(20)}$ is needed.

As in this paper we will be satisfied with determining the potential up to
first order, there is nothing else to be done for solutions of multiplicity one. 
Therefore, in the following we will restrict ourselves to solutions of 
multiplicity two. We know from the preceding section that then $a_d^{(11)}$ 
is restricted at the zeroth-order level only by the condition $K^d a_d^{(11)} =0$. 
This condition restricts the polarization plane to a three-dimensional space, i.e.,
it still allows the polarization plane to arbitrarily rotate along 
a ray. We will now demonstrate that the first-order equation (\ref{eq:Max2first1b})
gives us a transport law which uniquely determines the polarization plane
along a ray if it is given at one point of this ray. We will consider 
first solutions of multiplicity two with $u_a u^a =0$ and then with
$u_au^a \neq 0$.

\subsection{Transport equation in the case $\boldsymbol{u_au^a=0}$}\label{sec:transport1}

For a solution of multiplicity two with $u_au^a=0$ the rays are lightlike 
geodesics not only with respect to each of the two optical metrics but also 
with respect to the spacetime metric. (The affine parametrizations are in 
general different.) For such a solution we have $u_a$ and $v_a$ parallel to 
$p_a$ and the matrix $M_b{}^d$ projects onto the line spanned by $p_b$, 
recall Proposition \ref{prop:aux2a}. As a consequence, (\ref{eq:Max2first1b}) 
reduces to
\begin{gather}
4 \, \mathcal{L} _F p^a \nabla _a a_b^{(11)} 
+ 2 \nabla _a \big( \mathcal{L} _F p^a \big) a_b^{(11)}
\nonumber
\\
- \nabla ^a \mathcal{L} _G \, \varepsilon _{ab}{}^{cd} p_c a_d^{(11)}
= \psi \, p_b
\label{eq:transprinc}
\end{gather}
where $\psi$ is an undetermined scalar function.
Recall from Proposition \ref{prop:aux2a} that in the case at hand the two
transport vector fields $K^a_+$ and $K^a_-$ are multiples of $p^a$, i.e., 
that $p^a$ is tangent to the rays. Therefore, (\ref{eq:transprinc}) gives us
a first-order ordinary differential equation for $a_b^{(11)}$ along each
ray. As $\psi$ is arbitrary, for each initial condition this differential 
equation has a solution that is unique up to a multiple of $p_b$. In other
words, (\ref{eq:transprinc}) gives us a unique transport law for the 
polarization plane.

If $\nabla ^a \mathcal{L} _G \, \varepsilon _{ab}{}^{cd} p_c a_d^{(11)} =0$, we read 
from (\ref{eq:transprinc}) that the polarization plane is parallel with respect to 
the transport law defined by the Levi-Civita derivative of the spacetime metric, as 
it is in the standard Maxwell vacuum theory; in general, however, in a theory of the 
Pleba{\'n}ski class a background field with non-constant $\mathcal{L} _G$ produces a 
rotation of the polarization plane. This gives us a new experimental test of this type
of theories in situations where the rays behave as in the standard vacuum Maxwell 
theory but the polarization plane does not. We will exemplify this with the Born-Infeld 
theory in the next section.

\subsection{Transport equation in the case $\boldsymbol{u_au^a \neq 0}$}\label{sec:transport2}

We now consider a solution of multiplicity two with $u_au^a \neq 0$. For such
solutions we know from Proposition \ref{prop:aux2b} that the matrix $M_b{}^d$
has a three-dimensional kernel spanned by $p_b$, $u_b$ and $v_b$, i.e., that
$a_d^{(11)}$ is of the form
\begin{equation}\label{eq:albega}
a_d^{(11)} = \alpha u_d + \beta v_d + \gamma p_d \, .
\end{equation}
By the same token, as the matrix $M_b{}^d$ is self-adjoint with respect to the 
spacetime metric, (\ref{eq:Max2first1b}) is true with some $a_d^{(21)}$ if and 
only if the equation
\begin{equation}\label{transportx}
0 = z^b \Big\{\nabla ^a \Big( \chi _{ab}{}^{cd} p_c \, a_d^{(11)} \Big)
+  p ^a  \, \chi _{ab}{}^{cd} \nabla _c a_d^{(11)} \Big\} 
\end{equation}
is true for $z^b=p^b$, $z^b=u^b$ and $z^b=v^b$. It is easy to check that for 
$z^b=p^b$ the equation is identically satisfied, for all $a_d^{(11)}$ of the 
form (\ref{eq:albega}). Therefore, we only have to consider it for $z^b=u^b$ and 
$z^b=v^b$. 

To that end, we recall that a solution of multiplicity two with $u_au^a \neq 0$ 
exists only if the two optical metrics coincide. In the following we write 
$\sigma$ for $\sigma _+ = \sigma _-$ and  $K^a$ for $K^a_+ = K^a_-$. If we express 
$\mathcal{L}_{FF}$, $\mathcal{L}_{FG}$ and $\mathcal{L}_{GG}$ with the help of 
part (d) of Proposition \ref{prop:nobiref}, we see that $\chi _{ab}{}^{cd}$ can 
be written as
\begin{gather}
\chi _{ab}{}^{cd} = \mathcal{L} _G \varepsilon _{ab}{}^{cd}
- 2 \, \mathcal{L}_F \big( \delta _a^c \delta _b^d - \delta _a^d \delta _b^c \big)
\nonumber
\\
- \dfrac{2 \, \mathcal{L}_F \sigma}{D}
F_{ab} \big( (1+ \sigma F ) F^{cd} - \sigma G {}^{*} {\!} F^{cd} \big)
\nonumber
\\
+ \dfrac{2 \, \mathcal{L}_F \sigma}{D}
{}^{*} {\!} F_{ab}  \big( \sigma G F^{cd} - {}^{*} {\!} F^{cd} \big) \Big)
\label{eq:chimulttwo}
\end{gather}
where $D=1+\sigma F - \sigma ^2 G^2$.

If we insert this expression and (\ref{eq:albega}) into (\ref{transportx}) with 
$x^b=u^b$ and with $x^b=v^b$, we get after some lengthy algebra the two equations
\begin{equation}\label{eq:transport1}
4 \, \mathcal{L}_F u_bu^b K^a \nabla _a \alpha 
= a \, \alpha + b \, \beta \, ,
\end{equation}
\begin{equation}\label{eq:transport2}
4 \, \mathcal{L}_F u_bu^b K^a \nabla _a \beta 
= a \, \beta - b \, \alpha \, ,
\end{equation}
where
\begin{equation}\label{eq:defa}
a = \, - \, 2 \, \nabla _a \big( \mathcal{L}_F u_cu^c K^a \big)   \, ,
\end{equation} 
\begin{gather}
b = \nabla ^a \mathcal{L} _G \, \varepsilon _{abcd} p^bv^cu^d
\nonumber
\\
+ 2 \, \mathcal{L}_F p^b \big( p^ap^x-p_ep^e g^{ac} \big)
\big( {}^{*} {\!} F_a{}^d \nabla _cF_{db}
+F_a{}^d \nabla _c {}^{*} {\!} F_{db} \big) \, .
\label{eq:defb}
\end{gather} 
These equations determine the change of $\alpha$ and $\beta$ and, thus, of the
polarization plane, along each ray. In particular, $b$ determines the rotation 
of the polarization plane with respect to the basis covectors $u_b$ and $v_b$ 
which are orthogonal to each other, but not parallelly transported, with respect 
to the optical metric.


\section{Example: Born-Infeld theory}\label{sec:BI}
As an example, we consider the transport law for the polarization
plane in the Born-Infeld theory \cite{BornInfeld1934} where the 
Lagrangian is given by (\ref{eq:LBI}). In this case, for
any background field, the two optical metrics coincide,
\begin{equation}\label{eq:sigmaBI}
\sigma _+ = \sigma _- =  \dfrac{-1}{b_0^2+F} \, ,
\end{equation}
\begin{equation}\label{eq:gBI}
\tilde{g}{}_{+}^{ab} = \tilde{g}{}_{-}^{ab} = g^{ab} - \dfrac{F^{ac}F^b{}_c}{b_0^2 +F}
\end{equation} 
so there is no birefringence and every solution to the eikonal equation
is a solution of multiplicity two.

We assume that the underlying spacetime is the Minkowski spacetime with 
standard inertial coordinates $(x^0=ct,x^1,x^2,x^3)$, i.e., $g_{ab}
= \eta _{ab}$ where $(\eta _{ab} ) = \mathrm{diag} (-1,1,1,1)$. As the
background field we choose the superposition of a time-dependent electric field
and a constant magnetic field, both in $x^3$ direction; so the only 
non-vanishing components of the field strength tensor are
\begin{equation}\label{eq:BIF}
F_{03} = - F_{30} = \dfrac{E(t)}{c} 
\, , \quad
F_{12}= - F_{21} = B_0 \, .
\end{equation}
Note that the homogeneous Maxwell
equation is indeed satisfied, $\varepsilon ^{abcd} \nabla _b F_{cd} =0$. For this 
electromagnetic field, 
\begin{equation}\label{eq:BIp}
p_a = \nabla _a S \, , \quad S(x^0,x^1,x^2,x^3) = \dfrac{1}{c} \big( x^3-x^0 \big)
= \dfrac{x^3}{c} - t
\end{equation}
is a principal null covector field, $p_bp^b=0$ and
\begin{equation}\label{eq:BIu}
u_a=F_{ab}p^b = - E  ( t ) \, p_a 
\, , 
\end{equation}
\begin{equation}\label{eq:BIu}
v_a= {}^{*} {\!} F_{ab}p^b = - c B_0 p_a \, , 
\end{equation}
so the eikonal equation is satisfied with $u_au^a=0$. The transport vector 
field $K^a$ is  proportional to $p^a$, i.e., the rays are straight lightlike 
lines in $x^3$ direction. Note that they are lightlike not only with respect to
the spacetime metric; they are lightlike geodesics also with respect to 
the optical metric. Whereas $K^a$ is adapted to an affine parameter with
respect to the optical metric, $p^a$ is adapted to the parametrization with
the time coordinate $t$ which is an affine parameter with respect to the
spacetime metric. 

The amplitude $a_b^{(11)}$ must be orthogonal to $p_c$ with respect to the
spacetime metric, so we may write it in the form
\begin{equation}\label{eq:BIa}
a_b^{(11)} = \zeta \big( \delta _b^1 \, \mathrm{cos} \, \varphi 
+ \delta _b^2 \, \mathrm{sin} \, \varphi \big) + \gamma \, p_b
\end{equation}
with scalar coefficients $\zeta$ and $\gamma$ and an angle $\varphi$ which gives,
at each point along the ray, the rotation of the polarization plane with respect
to the $(x^1,x^2)$ basis vectors which are parallel with respect to the 
Levi-Civita derivative of the metric along the ray.

For the electromagnetic field (\ref{eq:BIF}), the partial
derivatives of the Lagrangian are
\begin{equation}\label{eq:BILF}
\mathcal{L} _F = \dfrac{- \, 1 }{2 \sqrt{1+ \dfrac{B_0^2}{b_0^2}}
\sqrt{ 1 - \dfrac{E (t) ^2}{c^2b_0^2} }}
\, ,
\end{equation}
\begin{equation}\label{eq:BILG}
\mathcal{L} _G = \dfrac{- \, B_0 E ( t )}{
c \, b_0^2 \sqrt{1+ \dfrac{B_0^2}{b_0^2}}
\sqrt{ 1 - \dfrac{E (t) ^2}{c^2b_0^2} }}
\, .
\end{equation}
With these results, inserting (\ref{eq:BIa}) into (\ref{eq:transprinc}) yields
\begin{equation}\label{eq:dotphi}
\dot{\varphi} = 
\dfrac{- \, B_0 \dfrac{d E(t)}{dt}}{2 \, c \, b_0^2 
\Big( 1 - \dfrac{E (t)^2}{c^2b_0^2}\Big)}
\end{equation}
where the overdot means derivative with respect to $t$ along the ray, 
$p^a \nabla _a \varphi = \dot{\varphi}$. We see that the time-dependence
of $\mathcal{L}_G$ produces a rotation of the polarization
plane. Integrating (\ref{eq:dotphi}) from $t_1= 0$ with $E(t_1)=0$ to 
a time $t_2$ with $E(t_2)=E_0$ gives 
\begin{equation}\label{eq:phi}
\Delta \varphi = 
\dfrac{B_0}{2 b_0} \, \mathrm{arctanh} \Big( \dfrac{E_0}{c b_0} \Big) 
=
\dfrac{B_0 E_0}{2 c b_0 ^2} \, 
\left( 1 + O \Big( \dfrac{E_0^2}{c^2 b_0^2} \Big) \right) 
\, . 
\end{equation}
In principle, this can be utilized for a new laboratory test of the Born-Infeld 
theory. If in the considered constellation with strong fields $E_0$ and $B_0$ 
no rotation of the polarization plane is detected, this gives a lower
bound on $b_0$. 
However, the effect on the polarization plane is so small that with present-day
technology this test of the Born-Infeld theory is not yet competitive with 
other tests. E.g., in Ref. \cite{SchellstedePerlickLaemmerzahl2015} we have 
seen that with interferometric methods one could find a bound on $b_0$ in the
order of $b_0 \gtrsim 7 \times 10^7$ T which corresponds to $b_0 \gtrsim 7 \times 10^{11} 
\sqrt{\mathrm{g}} / \big( \sqrt{\mathrm{cm}} \, \mathrm{s} \big)$ in Gaussian units.
Assuming that a rotation of the polarization plane by one 
arcminute could be measured, $\Delta \varphi \approx 3 \times 10^{-4}$ rad, we would 
need electric and magnetic fields of 
\begin{equation}\label{eq:E0B0}
\left| B_0 \, \dfrac{E_0}{c} \right| \approx 3 \times 10^{12} \, \mathrm{T}^2
\end{equation}
for being competitive with the interferometric test. This is not reachable in
a laboratory experiment in the near future.

\section{Conclusions}\label{sec:conclusions}

It was the main purpose of this paper to derive the transport law of the
polarization plane in nonlinear vacuum electrodynnamics. We have done
this, on an unspecified general-relativistic spacetime, for a theory of the 
Pleba{\'n}ski class and an electromagnetic background field which were
arbitrary except for some non-degeneracy conditions. To that
end we have utilized an approximate-plane-harmonic-wave ansatz which takes the
generation of higher harmonics and frequency rectification into account.
According to this ansatz, the electromagnetic field is written as an
asymptotic series with respect to a parameter $\alpha$ where the limit
$\alpha \to 0$ refers to sending the frequency to infinity. We have 
seen that the generalized Maxwell equations have to be solved to 
zeroth and to first order with respect to $\alpha$ for determining the 
transport law of the polarization plane in lowest non-trivial order. 

When considering the generalized Maxwell equations to zeroth
order, we have rederived the known result that, for every theory of
the Pleba{\'n}ski class and every background field that satisfy
the assumed non-degeneracy conditions, there are two optical metrics 
which are both of Lorentzian signature. For a solution of the zeroth-order 
equations one needs a scalar function, the so-called eikonal function $S$, 
whose gradient $p_b = \nabla _b S$ is lightlike with respect to at least one of 
the two optical metrics, and a covector field, $a_b^{(11)}$, which has 
to satisfy an algebraic equation known as the zeroth-order polarization 
condition. 

We have seen that two cases have to be distinguished. The first case
is that of a solution of multiplicity one, i.e., the case that $p_b=\nabla _bS$
is lightlike with respect to only one of the optical metrics. Then the 
polarization plane (i.e., the plane spanned by $a_b^{(11)}$ and $p_b$) is
uniquely determined by the zeroth-order polarization condition. The 
first-order equations give no additional information on the polarization
plane and have to be checked only for consistency. The second case is that of 
a solution of multiplicity two, i.e., the case that $p_b = \nabla _b S$ is
lightlike with respect to both optical metrics. Then the 
zeroth-order polarization condition allows an arbitrary rotation of the 
polarization plane along each ray. However, the generalized Maxwell equations
at first order give us a transport equation which determines the polarization 
plane uniquely along a ray if it is given at one point of this ray. This 
transport law has a fairly simple form in the case that $p_b$ is a principal
null covector of the electromagnetic background field, $F_a{}^bp_b \sim p_a$,
see Section \ref{sec:transport1}. It is much more awkward if this is not
the case, see Section \ref{sec:transport2}. 

We have exemplified the general results with the Born-Infeld theory. In this
theory the two optical metrics coincide, i.e., all solutions of the eikonal 
equation are of multiplicity two. We have considered a particular solution 
where $p_b$ is a principal null covector of a background field on Minkowski 
spacetime. In this example, the rays are straight lightlike lines, i.e., the 
light propagation is the same as in the standard Maxwell vacuum theory. However, 
the behavior of the polarization plane is different: Whereas in the standard 
Maxwell theory it is parallely transported along each ray, here it rotates 
with respect to a parallely transported plane by an angle $\Delta \varphi$. 
This is a feature that could be observed in a laboratory experiment, sometimes
in the future, when sufficiently strong fields are available.

\section*{Acknowledgements}
We thank Gerold Schellstede for helpful discussions.
V.P.  is  grateful  to  Deutsche Forschungsgemeinschaft for financial 
support under Grant  No.   LA  905/14-1. Moreover, C.L. and V.P.
gratefully  acknowledge support from the Deutsche Forschungsgemeinschaft 
within the Research Training Group 1620 ``Models of  Gravity.''
A.M. acknowledges support from DAAD.


\end{document}